\renewcommand{\paragraph}{%
  \@startsection{paragraph}{4}%
  {\z@}{1.25ex \@plus 1ex \@minus .2ex}{-1em}%
  {\normalfont\normalsize\bfseries}%
}
\algrenewcommand\alglinenumber[1]{\normalsize #1.} 
\newcounter{algsubstate}
\definecolor{darkmagenta}{rgb}{0.85, 0, 0.45}
\newcommand{\ket}[1]{\left| #1 \right>}
\newcommand{\bra}[1]{\left< #1 \right|}
\newcommand{\ketbra}[2]{\ket{#1} \!\! \bra{#2}}
\newcommand{\pure}[1]{\ketbra{#1}{#1}}
\newcommand{\inn}[2]{\langle#1|#2\rangle} 
\newcommand{\tr}[2][]{\operatorname{Tr}_{#1}\!\left[#2\right]} 
\newcommand{\binh}{h_2} 
\newcommand{\acos}{\cos^{-1}}
\newcommand{\constr}{\mu} 
\newcommand{\defvar}{\coloneqq} 
\newcommand{\dop}[1]{\operatorname{S}_{#1}} 
\newcommand{\eps}{\epsilon}
\newcommand{\Hmin}{H_\mathrm{min}}
\newcommand{\Hmax}{H_\mathrm{max}}
\newcommand{\id}{\mathbb{I}} 
\newcommand{\idmap}{\mathcal{I}} 
\newcommand{\norm}[1]{\left\lVert#1\right\rVert} 
\newcommand{\pd}{P} 
\newcommand{\pvm}{P} 
\newcommand{\q}{q} 
\newcommand{\str}[1]{\mathbf{#1}} 
\newcommand{\smf}[1]{\vartheta_{#1}} 
\newcommand{\suchthat}{\text{ s.t.}} 
\newcommand{\term}[1]{\emph{#1}}
\newcommand{\esecr}{\eps^\mathrm{sec}}
\newcommand{\es}{{\eps_s}}
\newcommand{\esp}{{\eps'_s}}
\newcommand{\esL}{\nu} 
\DeclareMathAlphabet{\mathpzc}{OT1}{pzc}{m}{it} 
\newcommand{\Renyi}{R\'{e}nyi}
\newcommand{\qA}{Q^A}
\newcommand{\qB}{Q^B}
\newcommand{\En}{\mathsf{E}}
\newcommand{\mA}{M^A}
\newcommand{\mB}{M^B}
\newcommand{\lchann}{\mathcal{L}}
\newcommand{\lk}{L}
\newcommand{\lkE}{\lk}
\newcommand{\gnd}{\phi}
\newcommand{\dleak}{\delta_\mathrm{leak}} 
\newcommand{\mchann}{\mathcal{M}} 
\newcommand{\ptest}{p^\mathrm{test}}
\newcommand{\pgen}{p^\mathrm{gen}}
\newcommand{\fcont}{f_\mathrm{cont}}
\newcommand{\dom}{D}
\newcommand{\dL}{d_{\lkE}}
\newcommand{\dualf}{g}
\newcommand{\lagG}{\kappa} 
\newcommand{\lagE}{\beta} 
\newcommand{\lagP}{\lambda} 
\newcommand{\avgE}{E_\mathrm{exp}}
\newcommand{\cutE}{E_\mathrm{cutoff}}
\newtheorem{remark}{Remark}
\newtheorem{lemma}{Lemma}
\theoremstyle{definition} 
\newtheorem{definition}{Definition}
\begin{document}

\title{\textbf{Robustness of implemented device-independent protocols against constrained leakage}}
\renewcommand\Affilfont{\itshape\small} 

\author[1]{Ernest Y.-Z.\ Tan}
\affil[1]{Institute for Quantum Computing and Department
of Physics and Astronomy, University of Waterloo, Waterloo, Ontario N2L 3G1, Canada.}

\date{}

\maketitle

\begin{abstract}
Device-independent (DI) protocols have experienced significant progress in recent years, with a series of demonstrations of DI randomness generation or expansion, as well as DI quantum key distribution. However, existing security proofs for those demonstrations rely on a typical assumption in DI cryptography, that the devices do not leak any unwanted information to each other or to an adversary. This assumption may be difficult to perfectly enforce in practice. While there exist other DI security proofs that account for a constrained amount of such leakage, the techniques used are somewhat unsuited for analyzing the recent DI protocol demonstrations. In this work, we address this issue by studying a constrained leakage model suited for this purpose, which should also be relevant for future similar experiments. Our proof structure is compatible with recent proof techniques for flexibly analyzing a wide range of DI protocol implementations. With our approach, we compute some estimates of the effects of leakage on the keyrates of those protocols, hence providing a clearer understanding of the amount of leakage that can be allowed while still obtaining positive keyrates.
\end{abstract}

\section{Introduction}

Device-independent (DI) cryptography is the concept of exploiting Bell inequality violations from quantum devices to achieve cryptographic tasks~\cite{BHK05,PAB+09,Sca12}. 
Informally, it relies on the observation that if two or more devices are not allowed to communicate between each other, then the only way for them to violate a Bell inequality is for them to be performing some ``genuinely quantum'' operations; therefore, certifying a Bell inequality violation ensures that some form of quantum behaviour is occurring in the devices, which could potentially be used for cryptographic purposes. The critical point about this reasoning is that it holds regardless of what states and measurements are being implemented in the devices, hence the term ``device-independent'' --- informally, the only assumption made on the devices in a DI security proof is that they do not communicate information outside of the protocol's specifications. This is in contrast to more ``standard'' quantum cryptography protocols, which are ``device-dependent'' in the sense that they rely on the honest parties' device measurements (and/or state preparation) being well-characterized. DI cryptography hence offers a path towards quantum cryptography that is robust to a wide class of device imperfections, since it remains secure even if the states and measurements deviate from the intended ones.

In the past years, a number of theoretical~\cite{DFR20,ZKB18,ZFK20} and experimental~\cite{HBD+15,SMC+15,GVW+15,RBG+17} advances have led to significant development in DI security proofs and protocol implementations. In particular, two families of DI protocols have undergone especially notable progress: \term{device-independent randomness expansion} (DIRE), in which the goal is to expand a short secret string into a longer one, and \term{device-independent quantum key distribution} (DIQKD), in which the goal is for two parties to establish a shared secret key. There have been several recent experimental demonstrations of DIRE~\cite{LLR+21,SZB+21,LZL+21} (and the related task of DI randomness generation~\cite{LZL+18,ZSB+20}), and progress has also been made on DIQKD demonstrations~\cite{NDN+22,ZLR+22,LZZ+22}, though with somewhat worse performance as compared to DIRE.

In light of the above, it is important to ensure that the security proofs for these DI protocols are based on as accurate a model of the physical devices as possible. One limitation of most existing DI security proofs is that they rely on the assumption that absolutely no unwanted information ``leaks'' from the devices: this assumption is used not only to enforce the condition that the devices do not communicate in order to achieve the Bell violation, but also to ensure that the raw data produced by the devices is not simply broadcasted to an adversary. While this is already a fairly weak assumption as compared to device-dependent quantum cryptography, it is perhaps somewhat unrealistic as an absolute condition --- one would expect that in a given physical implementation, some small amount of information could potentially leak out from the devices. 
It is hence important to ensure that DI protocols have some ``robustness'' against a limited amount of such leakage: while it is certainly intuitive that such robustness should hold, past observations such as the phenomenon of \term{information locking}~\cite{KRBM07,DHL+04,Win17} indicate that some care is needed here, as there can be situations where a small amount of additional information can have an unexpectedly large effect. 

There have been a few previous works studying the effects of leakage or communication between devices in DI protocols; however, the models they have used are not a very close fit for describing the recent experimental implementations of DIRE or DIQKD. For instance,~\cite{SPM13} studied a model for weak cross-talk between the devices, but it is not straightforward to extend that model to account for leakage to an adversary when the device behaviour is not independent and identically distributed (IID) across protocol rounds. A model that accounts for non-IID behaviour and leakage to an adversary was studied in~\cite{arx_JK21}, but the leakage constraint in that model is that only a bounded number of qubits\footnote{The number of leakage qubits can be linear with respect to the number of protocol rounds $n$, but must be smaller than the amount of smooth min-entropy that would be produced without leakage (see Sec.~\ref{sec:fullprot} for more details).} is leaked between the devices and/or adversary over the course of the protocol. While this is a clean model to study, in the context of experimental implementations it is not straightforward to rigorously formulate a nontrivial upper bound on the number of qubits leaked during a protocol --- for instance, if the leakage occurs via photonic systems with some classical Poissonian number distribution, then in theory the state has infinite-dimensional support. (Of course, qualitatively one would expect that if for instance the state has a large vacuum component, then it should not leak too much ``useful information'', but formalizing this idea is part of the goal in this work.) In a somewhat different direction,~\cite{MS14,CL19,LRR19} have studied methods to mitigate leakage arising from device-reuse attacks~\cite{BCK13} and malicious classical post-processing units, but this is mostly focused on leakage that occurs after the protocol has finished distributing and measuring the states, rather than leakage that occurs during that process.

We also note that outside of DI cryptography, there have been studies of modified prepare-and-measure scenarios that can potentially be viewed as describing some form of untrusted leakage between the devices~\cite{TZWP22,TPWP21,PPWT22}; however, the setups studied in those works are currently somewhat different from those in DI cryptography (though we note that the ``bounded-weight'' model we describe later in this work has similarities to the model used in~\cite{PPWT22}). Also, for device-dependent QKD in particular, the possibility of leakage due to detector backflash was considered in e.g.~\cite{KZMW01,PCS+18}. That model is quite similar to what we consider in this work, but we extend the analysis to the DI scenario and handle a number of complications that arise when non-IID behaviour is allowed (see Sec.~\ref{sec:leakmodel}). 

The contribution of this work is to study a model for constrained leakage from the devices in a DI protocol, suited for analyzing existing DIRE and DIQKD demonstrations (though the analysis may generalize to some other DI protocols). 
Qualitatively, the idea is that in each round we assign some registers that model the leakage processes, and impose the constraint that with high probability the leakage registers are in some fixed ``blank'' state.
While this model is fairly simple, it seems possible to estimate such leakage probabilities in some DI experimental setups~\cite{NDN+22}, and we note that (as observed in~\cite{SPM13}) allowing for leakage in any sense is already covering a strictly wider class of scenarios than is typical in DI security proofs, or for that matter most device-dependent security proofs.
Our approach for analyzing this leakage model takes place in roughly two parts. First, we analyze its effect on single rounds of the protocol, essentially by arguing that in this constrained leakage model, the state in each round is ``close'' to one where no leakage occurred. Next, we describe how this can then be converted into a security proof for the full protocol by using a series of entropic chain rules, taking into account non-IID effects (in particular, the possibility that leakage in later rounds could contain information about the secret data produced in earlier rounds).  
We also remark that although the focus of this work is on DI cryptography, the techniques described here should generalize to a variety of device-dependent protocols as well, such as QKD or QRNG.

This paper is structured as follows. In Sec.~\ref{sec:prelim}, we introduce notation and specify the leakage model, as well as describing the overall proof structure. In Sec.~\ref{sec:1rnd} we present the analysis of single rounds, and in Sec.~\ref{sec:fullprot} we describe how to obtain a security proof for the full protocol; in each section we explicitly compute some examples showing how much the keyrates are reduced by the leakage effects. Finally, in Sec.~\ref{sec:conclusion} we describe some potential future directions to explore.

\section{Preliminaries}
\label{sec:prelim}

\subsection{Notation and definitions}
\begin{table}[h!]
\caption{List of notation}
\def\arraystretch{1.5} 
\setlength\tabcolsep{.28cm}
\centering
\begin{tabular}{c l}
\toprule
\textit{Symbol} & \textit{Definition} \\
\toprule
$\log$ & Base-$2$ logarithm \\
\hline
$H(\cdot)$ & Base-$2$ von Neumann entropy \\
\hline
$\norm{\cdot}_p$ & Schatten $p$-norm \\
\hline
$\dop{=}(A)$ (resp.~$\dop{\leq}(A)$) & Set of normalized (resp.~subnormalized) states on register $A$ \\
\hline
$A_j^k$ & Registers $A_j A_{j+1} \dots A_{k-1} A_k$
\\
\toprule
\end{tabular}
\def\arraystretch{1}
\label{tab:notation}
\end{table}

For technical reasons (as some of the theorems we use may not have been proven yet for infinite-dimensional systems) we take all systems to be finite-dimensional, but we will not impose any bounds on the system dimensions unless otherwise specified, i.e.~they can have unboundedly large finite dimension.

\begin{definition}
For $\rho,\sigma \in \dop{\leq}(A)$, the \term{generalized fidelity} is
\begin{align}
F(\rho,\sigma) \defvar \norm{
\sqrt{\rho}\sqrt{\sigma}
}_1 + \sqrt{(1-\tr{\rho})(1-\tr{\sigma})},
\end{align}
and the \term{purified distance} is $\pd(\rho,\sigma)\defvar\sqrt{1-F(\rho,\sigma)^2}$.
\end{definition}
\noindent (Note that this means we are using the convention that for normalized pure states, we have $F(\pure{\psi},\pure{\phi}) = |\inn{\psi}{\phi}|$, not $|\inn{\psi}{\phi}|^2$.)

We now state the definitions of various entropies required in our analysis.
We follow the presentation in~\cite{DFR20}, which can be shown to be equivalent to the definitions in~\cite{Tom16}.

\begin{definition}\label{def:HminHmax}
For $\rho\in\dop{\leq}(AB)$, the \term{min- and max-entropies of $A$ conditioned on $B$} are
\begin{align}
\Hmin(A|B)_\rho &\defvar 
-\log 
\min_{\substack{\sigma \in \dop{\leq}(B) \suchthat\\ \ker(\rho_B)\subseteq\ker(\sigma_B)}} 
\norm{\rho_{AB}^\frac{1}{2}
(\id_A \otimes \sigma_{B})
^{-\frac{1}{2}}}_\infty^2,
\\ 
\Hmax(A|B)_\rho &\defvar \log 
\max_{\sigma \in \dop{\leq}(B)} 
\norm{\rho_{AB}^\frac{1}{2}
(\id_A \otimes \sigma_{B})
^\frac{1}{2}}_1^2, 
\end{align}
where in the first equation the 
$(\id_A \otimes \sigma_{B})
^{-\frac{1}{2}}$ term
should be understood in terms of the Moore-Penrose generalized inverse.
(Note that the optimum is indeed attained in both equations~\cite{Tom16}, and it can be attained by a normalized state 
so $\dop{\leq}(B)$ can be replaced by $\dop{=}(B)$ without loss of generality.)
\end{definition}

\begin{definition}\label{def:Hsmooth}
For $\rho\in\dop{\leq}(AB)$ and $\es\in\left[0,\sqrt{\tr{\rho_{AB}}}\right)$, the \term{$\es$-smooth min- and max-entropies of $A$ conditioned on $B$} are
\begin{align}
\Hmin^\es(A|B)_\rho \defvar
\max_
{\substack{\tilde{\rho} \in \dop{\leq}(AB) \suchthat\\ \pd(\tilde{\rho},\rho)\leq\es}}
\Hmin(A|B)_{\tilde{\rho}}, 
\qquad
\Hmax^\es(A|B)_\rho \defvar
\min_
{\substack{\tilde{\rho} \in \dop{\leq}(AB) \suchthat\\ \pd(\tilde{\rho},\rho)\leq\es}} 
\Hmax(A|B)_{\tilde{\rho}}.
\end{align}
\end{definition}

\begin{definition}\label{def:Renyi}
For $\rho\in\dop{=}(A)$ and $\alpha \in (0,1) \cup (1,\infty)$, the \term{$\alpha$-{\Renyi} entropy of $A$} is
\begin{align}
H_\alpha(A)_\rho \defvar 
\frac{1}{1-\alpha} \log \norm{\rho_A}^\alpha_\alpha.
\end{align}
\end{definition}
\noindent In the $\alpha\to 1$ limit, the {\Renyi} entropy reduces to the von Neumann entropy. (The {\Renyi} entropy can be extended to include conditioning systems in multiple different ways~\cite{Tom16}, but we will not require those in this work; all those definitions match the above expression when there is no conditioning system.)

\subsection{Leakage model}
\label{sec:leakmodel}

We shall suppose (as is the case in DIRE and DIQKD) that the protocol begins with Alice and Bob performing $n$ sequential rounds of supplying some classical inputs to their devices and receiving some classical outputs. To account for leakage from the devices over this part of the protocol, we shall consider the following model.
(The specific order of events described here may appear slightly restrictive, but the analysis remains essentially similar if we consider some more general versions; see Appendix~\ref{app:varorder} for further discussion.)
We use the following notation:
for the $j^\text{th}$ round, $\qA_j$ (resp.~$\qB_j$) denotes the quantum register that will be measured in Alice's (resp.~Bob's) device, $X_j$ (resp.~$Y_j$) denotes the input supplied to the device, $A_j$ (resp.~$B_j$) denotes the output obtained, and $\mA_j$ (resp.~$\mB_j$) denotes a memory register the device can retain from previous rounds.
We require the classical registers storing the input and output values to have some known finite dimension; the other registers can be of unknown dimension.
We also introduce several registers $\lk^{A\to B}_j, \lk^{A\to E}_j, \lk^{B\to A}_j, \lk^{B\to E}_j$ to track the ``leakage'' processes we are about to describe.
Finally, let $\En$ denote a register Eve holds to collect quantum side-information across all the rounds (she can update this register as each round occurs, as we shall describe below --- in principle we could use a different register to denote her updated side-information after each such process, following e.g.~\cite{arx_MFSR22}, but as we allow the dimension of $\En$ to be unbounded, there is no loss of generality by just discussing this one register).
We then model the physical process in each round as follows:

\begin{enumerate}
\item A state preparation process takes place, modelled as follows: Eve first performs some channel $\En \to \En \qA_j \qB_j$, and then some other ``update'' channel\footnote{Here we have allowed this channel to act across the memory registers of \emph{both} devices --- the proof approach we shall use is compatible with such a structure~\cite{DFR20,ARV19,arx_MFSR22}, so we include this possibility for generality, even if it may not correspond to some intuitive physical process.} $\mA_j \mB_j \qA_j \qB_j \to \qA_j \qB_j$ is performed using the memory registers retained from the previous round, with the registers $\qA_j$ and $\qB_j$ at the end of this process being held in Alice and Bob's devices respectively.

\item Alice prepares an input $X_j$ to supply to her device, which performs some ``leakage channel'' $\lchann_j^A : \qA_j X_j \to \qA_j \lk^{A\to B}_j \lk^{A\to E}_j X_j$ that does not disturb\footnote{This no-disturbance condition is just for ease of discussion in our subsequent analysis, so that we only need a single register $X_j$ to keep a ``persistent'' record of Alice's input throughout the protocol. 
In principle, we could have instead said more formally that Alice copies the classical register $X_j$ onto another register $\hat{X}_j$ that is supplied to the device, which then performs some channel $\qA_j \hat{X}_j \to \qA_j \lk^{A\to B}_j \lk^{A\to E}_j$ without involving the original $X_j$ register. However, for brevity we will usually just say that this overall process is a channel that ``does not disturb $X_j$''.
} the classical register $X_j$.
Analogously, Bob supplies an input $Y_j$ to his device, which performs some leakage channel $\lchann_j^B : \qB_j Y_j \to \qB_j \lk^{B\to A}_j \lk^{B\to E}_j Y_j$ that does not disturb $Y_j$. 
For brevity, we shall write the overall leakage channel as $\lchann_j \defvar \lchann_j^A \otimes \lchann_j^B$. The registers $\lk^{A\to E}_j  \lk^{B\to E}_j$ are now sent to Eve, while $\lk^{B\to A}_j$ is sent to Alice's device and $\lk^{A\to B}_j$ is sent to Bob's device.

\item Alice's device performs some uncharacterized measurement channel $\mchann^A: \qA_j \lk^{B\to A}_j X_j \to A_j \mA_{j+1} X_j $ that does not disturb $X_j$, where $A_j$ is a classical register storing the measurement outcome and $\mA_{j+1}$ is the memory register retained for the next round. Analogously, Bob's device receives $\lk^{A\to B}_j$ and performs some uncharacterized measurement channel $\mchann^B: \qB_j \lk^{A\to B}_j Y_j \to B_j \mB_{j+1} Y_j$ that does not disturb $Y_j$.
Alice and Bob then announce their inputs $X_j Y_j$, and Eve can use those values to update her register $\En$.\footnote{Technically, depending on the exact protocol specification, Alice and Bob might not announce their inputs after each round. However, this is a rather specialized discussion and we defer the details to Appendix~\ref{app:EAT}.}
\end{enumerate}
\noindent Note that in the above description, we have imposed a subtle restriction on Eve --- specifically, while we allow her to update the register $\En$ (and thus the state preparation for the next round) using the values $X_j Y_j$, we do not include the possibility of her using the leakage registers $\lk^{A\to E}_j  \lk^{B\to E}_j$ as well when doing so in each round. This ``restricted adaptiveness'' condition is currently required for our proof approach, as we discuss in more detail (along with some motivating circumstances) in Sec.~\ref{sec:fullprot} later. 

If the leakage registers are unconstrained,
any attempt at DI cryptography is futile --- Bell violations can be trivially faked by using the registers $\lk^{A\to B}_j \lk^{B\to A}_j$ to communicate either party's input to the other's device, and furthermore the registers $\lk^{A\to E}_j \lk^{B\to E}_j$ could just give copies of all the device outputs to Eve.
However, note that the standard assumption in DI cryptography is equivalent to stating that all these leakage registers are trivial (or at least independent of the inputs/outputs), which might be considered to be too extreme of an assumption. Hence in this work, we study a scenario where we impose a more relaxed constraint on these registers. 
Specifically, we consider the following constraint. (As noted in the introduction, similar ideas have been explored in the context of device-dependent QKD, e.g.~in~\cite{KZMW01,PCS+18}. However, here we consider the DI case; furthermore there are some technical issues to address for non-IID leakage --- see Remark~\ref{remark:nonIID} below.)

\paragraph{Bounded-weight leakage constraint:}
We suppose we have certified some value $\dleak>0$ such that all $\lchann_j$ have the following property:
there exists some state 
$\pure{\gnd}$
such that if the measurement described by projectors $\left(\pure{\gnd}^{\otimes 4}, 
\id
-\pure{\gnd}^{\otimes 4}\right)$ is performed on the $\lk^{A\to B}_j \lk^{A\to E}_j \lk^{B\to A}_j \lk^{B\to E}_j$ registers in any state produced by $\lchann_j$, the probability of getting the outcome $\pure{\gnd}^{\otimes 4}$ is always at least $1-\dleak$ (regardless of the input state to $\lchann_j$).\\[-3mm]

Qualitatively, an example of physical reasoning behind such a constraint could be if $\pure{\gnd}$ is a ground state of those registers, and we have certified that if we measure those registers in some basis that includes the ground state as a possible outcome, we will obtain the ground state with high probability. 
Note that the bounded-weight constraint also straightforwardly implies (see Appendix~\ref{app:varprob}) that if e.g.~we consider only the registers $\lk^{A\to B}_j \lk^{B\to A}_j$ and perform the projective measurement $\left(\pure{\gnd}^{\otimes 2}, \id-\pure{\gnd}^{\otimes 2}\right)$, then the probability of getting the outcome $\pure{\gnd}^{\otimes 2}$ is at least $1-\dleak$ as well (and analogous statements hold for any other subset of the registers $\lk^{A\to B}_j \lk^{A\to E}_j \lk^{B\to A}_j \lk^{B\to E}_j$). 
(A minor variant of the bounded-weight constraint would be to impose an analogous condition on each of the registers individually, but that yields basically the same results up to rescaling $\dleak$ by a factor of $4$; we discuss it further in Appendix~\ref{app:varprob}.) 

We remark that the bounded-weight leakage model is somewhat more general than a constraint of the form ``with probability at least $1-\dleak$, the channel $\lchann_j$ acts locally in the devices and independently sets 
all leakage registers
to $\pure{\gnd}$''. Such a constraint would be more correctly expressed as the following version, which we shall term as ``classical-probabilistic leakage'':

\paragraph{Classical-probabilistic leakage constraint:}
We suppose we have certified some value $\dleak>0$ such that the following holds:
there exist channels 
$\hat{\lchann}^A_j: \qA_j X_j \to \qA_j X_j$ and
$\hat{\lchann}^B_j: \qB_j Y_j \to \qB_j Y_j$ and
$\hat{\lchann}_j: \qA_j \qB_j X_j Y_j \to \qA_j \qB_j \lk^{A\to B}_j \lk^{A\to E}_j \lk^{B\to A}_j \lk^{B\to E}_j X_j Y_j$ (that all do not disturb the classical registers $X_j Y_j$), 
such that all $\lchann_j$ have the form
\begin{align}\label{eq:classprob}
\lchann_j\!\left[
\omega
\right] &= (1-\dleak) \left(\hat{\lchann}^A_j \otimes \hat{\lchann}^B_j\right) \!\left[
\omega
\right] \otimes  
\pure{\gnd}^{\otimes 4}
+ \dleak 
\hat{\lchann}_j \!\left[
\omega
\right],
\end{align}
where the $\pure{\gnd}^{\otimes 4}$ term is on registers $\lk^{A\to B}_j \lk^{A\to E}_j \lk^{B\to A}_j \lk^{B\to E}_j$.\\[-3mm]

\noindent Since this version also seems to be a potentially plausible model to study, we shall analyze it as well in this work. 
Note that a classical-probabilistic leakage constraint implies the bounded-weight version (with the same $\dleak$), but the converse is not generally true (as a simple counterexample, consider a channel $\lchann_j$ that reads the classical value $X_j$ and sets the register $\lk^{A\to B}_j$ to a state of the form $\sqrt{1-\dleak} \ket{\gnd} + \sqrt{\dleak} \ket{\psi_{x_j}}$; a physical example of such states would be weak coherent states in photonic systems). Accordingly, we expect that more ``optimistic'' results could be obtained by assuming a classical-probabilistic leakage constraint, and 
(as we show in our results later) 
we find this can indeed be the case.

However, we now note that when non-IID behaviour is allowed, the above constraints by themselves seem to still be not quite sufficient to allow us to obtain nontrivial security guarantees. This is due to the following attack (that applies for either form of leakage constraint above), which for later reference we shall term a ``random full-leakage attack'': independently in each round, with probability $1-\dleak$ all the leakage registers of that round are set to $\pure{\gnd}$, otherwise both devices set their respective leakage registers to be a copy of \emph{all} the past outputs from that device, which for brevity we shall call a ``full-leakage event''
(note that this event can indeed be coordinated between the devices, because they can hold preshared randomness).
The underlying idea behind this attack is somewhat similar to the device-reuse attacks of~\cite{BCK13}, but we now argue in detail that leakage of this form also poses a problem within a single protocol implementation.

Specifically, recall that if a DIQKD or DIRE protocol is to achieve some nontrivial asymptotic keyrate, that means there is some constant $r>0$ (the asymptotic keyrate) such that the length of the final key approaches $rn$ at large $n$. The preceding attack renders this impossible, by the following argument:
if we take any fraction $r'\in[0,1]$, observe that under that attack, the probability of at least one full-leakage event occurring within the last $r'n$ protocol rounds is $1-(1-\dleak)^{r'n}$, which is close to $1$ at large $n$. Furthermore, a full-leakage event renders the preceding rounds useless, since Eve learns all previous outputs --- hence this means that with probability close to $1$, only at most the last $r'n$ rounds are useful for generating the final key. However, note that if we had picked for instance $r'=0.01r$ (or some appropriately smaller value if the device outputs have high dimension), then those $r'n$ rounds cannot have enough smooth min-entropy (see~\cite{RR12} or Sec.~\ref{sec:fullprot} below) to produce a secret key of length $rn$. 

Note that since the above attack directly obstructs the final goal of secret key generation (with nontrivial asymptotic keyrate), it does not represent a limitation of the proof approaches we use, but rather an inherent limitation of imposing only the preceding constraints on the leakage model.
Hence to obtain nontrivial results in the non-IID case, we shall need to impose further constraints on the leakage registers. Informally, one could say that the main power of the above attack comes from being able to encode a large amount of information in the leakage registers when they are not set to the $\pure{\gnd}$ state. To prevent this, in this work we shall consider two possible approaches: in Sec.~\ref{sec:dimbnd} we consider a dimension bound on the leakage registers, while in Sec.~\ref{sec:Ebnd} we consider a ``softer'' version of a dimension constraint in the form of an energy bound. We elaborate more on those constraints (and how they can be motivated) in their respective sections. 

\begin{remark}\label{remark:nonIID}
It is perhaps worth noting that in the IID case, however, these further constraints are in fact unnecessary --- the preceding attack relies on highly non-IID behaviour, and if one makes an IID assumption, it is possible to obtain nontrivial results without those further constraints. We sketch out the proof structure for that case in Remark~\ref{remark:IIDcase} in Sec.~\ref{sec:fullprot}. (This is also why we defer the discussion of the dimension or energy constraints to Sec.~\ref{sec:fullprot} on the full protocol analysis against non-IID attacks, because we do not in fact need those constraints yet when analyzing single rounds in Sec.~\ref{sec:1rnd}.)

We also note that in device-dependent QKD, a common proof technique for the non-IID case is to use de Finetti arguments~\cite{rennerthesis,CKR09} to reduce the analysis to the IID case. However, once (non-IID) leakage is allowed, there is a subtle obstacle in making this reduction valid even for device-dependent QKD. Specifically, the de Finetti arguments show that (roughly speaking) it suffices to consider the case where the quantum states supplied to Alice and Bob are IID; however, when leakage is allowed, this does \emph{not} imply that the leakage registers are also IID. Hence even for device-dependent QKD, existing arguments do not seem sufficient to reduce the non-IID case to the IID case when leakage is allowed.\footnote{As for device-dependent non-IID security proofs based on complementarity~\cite{SP00,Koa09} or one-shot uncertainty relations~\cite{TL17}, their relation to the IID case is somewhat less straightforward, so the question of whether they achieve a reduction to the IID case in the presence of leakage is somewhat ill-posed.
Although, the latter approach could be compatible with our analysis in Sec.~\ref{sec:fullprot}, since it proceeds via first finding a bound on the smooth min-entropy.}
\end{remark}

For convenience, before proceeding further we list a quick summary of the constraints imposed in our leakage model:
\begin{itemize}
\item Either a bounded-weight or classical-probabilistic leakage constraint for each round
\item Either a dimension bound or energy bound for each round (only needed in a non-IID scenario; see Remark~\ref{remark:IIDcase}) 
\item A ``restricted adaptiveness'' constraint on Eve (only relevant in a non-IID scenario; discussed further in Sec.~\ref{sec:fullprot})
\end{itemize}

\subsection{Security proof structure and protocol requirements}

In this work, we shall focus on the security proofs for DIRE or DIQKD that are based on the \term{entropy accumulation theorem} (EAT)~\cite{DFR20,DF19,LLR+21,NDN+22}.\footnote{A different proof technique known as \term{quantum probability estimation} (QPE)~\cite{KZB20,ZKB18,ZFK20} was used in some DIRE experiments~\cite{SZB+21,LZL+21}, but this approach is sufficiently different that we will not be discussing it in detail here. However, it does basically end up providing a bound on smooth min-entropy similar to that discussed in Sec.~\ref{sec:fullprot} based on the EAT, so the analysis in that section should also generalize to the QPE approach.}
The EAT provides a flexible and modular framework for security analysis of such protocols. 
Roughly speaking, the structure of proofs based on this approach can be divided into two core aspects (see e.g.~\cite{DFR20,TSB+22,LLR+21,NDN+22} for details). The first aspect is focused on analyzing the individual rounds, where the goal is roughly to solve an optimization problem that lower-bounds the von Neumann entropy of the device outputs in a single round (conditioned on some side-information Eve may hold) --- this quantity is useful for characterizing the asymptotic keyrates of such protocols, according to the EAT (or in the simpler IID case, this follows from e.g.~the Devetak-Winter formula~\cite{DW05} or the quantum asymptotic equipartition property (AEP)~\cite{TCR09}). The second aspect is to convert the solution to this optimization into an explicit bound on the length of secure key that can be produced by the protocol after a finite number of rounds. In the subsequent sections, we discuss each of these aspects in more detail, and how they can be modified to account for the leakage model we have described above.

Since in this work we are focusing on EAT-based proofs, we suppose that the protocol has the following structure, to maintain compatibility with the EAT analysis. 
First, we suppose that each of the $n$ rounds is independently chosen with some probability $\gamma$ to be a \term{test round} (informally, one that is used to gather statistics that determine whether the protocol aborts), and is otherwise taken to be a \term{generation round} (informally, one that generates entropy for a raw key that will be processed into a final key). When a test round occurs, Alice and Bob generate values $x$ and $y$ respectively with some fixed probabilities $\ptest_{xy}$ (using trusted local randomness), and supply these values as the inputs to their devices in that round. Analogously, in a generation round, Alice and Bob generate inputs with some other probabilities $\pgen_{xy}$.
After these $n$ input-output pairs have been gathered, some further classical post-processing procedures
are performed to produce the final secret key for the protocol. 
In particular, these procedures include\footnote{Some other post-processing procedures that we do not discuss here include for instance an \term{error correction} step in the case of DIQKD; these procedures does not significantly affect the analysis in this work and hence we do not describe them further.
} a \term{parameter estimation} step, where the protocol accepts if\footnote{The EAT can in fact also accommodate other forms of accept condition, for instance an accept condition based only on the CHSH winning probability, rather than a condition that involves every input-output tuple $(a,b,x,y)$ individually. However, for brevity in this work we will focus on the version described.} (for all input-output tuples $(a,b,x,y)$) the observed frequency of rounds that are test rounds in which Alice and Bob supplied inputs $x,y$ and obtained outputs $a,b$ lies within some small interval\footnote{Informally, this interval is simply a ``tolerance'' parameter to ensure the honest behaviour is still accepted with high probability when accounting for finite-size statistical fluctuations --- see e.g.~\cite{LLR+21,TSB+22} for precise calculations of the required interval widths.} around $\gamma \ptest_{xy} \constr^\mathrm{hon}_{ab|xy}$, where $\constr^\mathrm{hon}_{ab|xy}$ are the probabilities of getting outputs $a,b$ given inputs $x,y$ when the devices are honest.
Furthermore, the last classical post-processing procedure is a \term{privacy amplification} step, in which Alice and Bob process their data in such a way that it ``amplifies'' its privacy with respect to Eve, producing an ideal secret key (see~\cite{rennerthesis} for details). Our subsequent discussion will be based around protocols with this structure (in particular, we remark that this is indeed sufficient to cover the existing finite-size demonstrations of DIRE and DIQKD).

\section{Single-round entropy bounds}
\label{sec:1rnd}

\subsection{Fundamental optimization task}
\label{sec:1rndintro}

We now describe more precisely the relevant single-round optimization problem that has to be analyzed in an EAT-based security proof~\cite{DFR20,TSB+22,NDN+22}. Since we are focusing on a single round, for brevity in this section we shall omit the $j$ subscripts specifying individual rounds. 
Let us first recall the physical process in each round according to our leakage model: after the state preparation process involving the memory registers, there is some quantum state $\omega_{\qA \qB}$ in which Alice and Bob's devices hold registers $\qA$ and $\qB$ respectively. 
Let $\omega_{\qA \qB R}$ be an arbitrary purification of this state (this can be informally thought of as Eve's side-information, though for the purposes of this optimization problem it is just an abstract register holding a purification of $\omega_{\qA \qB}$).
Now, depending on whether it is a test or generation round, the input values $XY$ are generated according to the probabilities $\ptest_{xy}$ or $\pgen_{xy}$, using trusted local randomness. 
After these inputs are supplied to the devices, the leakage channel $\lchann: \qA \qB XY \to \qA \qB \lk^\mathrm{all} XY$ (writing $\lk^\mathrm{all} \defvar \lk^{A\to B} \lk^{A\to E} \lk^{B\to A} \lk^{B\to E}$ for brevity) is then applied. For the test and generation round cases, let us denote the resulting states as $\rho^\mathrm{test}$ and $\rho^\mathrm{gen}$ respectively, so we have
\begin{align}
\begin{gathered}
\rho^\mathrm{test}_{\qA \qB \lk^\mathrm{all} XY R} = \sum_{xy} \ptest_{xy} 
\rho^{xy}_{\qA \qB \lk^\mathrm{all} XY R}, \qquad
\rho^\mathrm{gen}_{\qA \qB \lk^\mathrm{all} XY R} = \sum_{xy} \pgen_{xy} 
\rho^{xy}_{\qA \qB \lk^\mathrm{all} XY R},
\\
\text{where } \rho^{xy}_{\qA \qB \lk^\mathrm{all} XY R} = 
(\lchann \otimes \idmap_R) \left[\omega_{\qA \qB R} \otimes \pure{xy}_{XY}\right].
\end{gathered}
\label{eq:afterleak}
\end{align}
Alice and Bob's devices then perform some unknown measurements $\mchann^A: \qA \lk^{B\to A} X \to A X$ and $\mchann^B: \qB \lk^{A\to B} Y \to B Y$ respectively (in a minor abuse of notation, here for brevity we omit the memory registers from these channel outputs, as they are not involved in analyzing this round). Let us use the following notation for the reduced states on $AB XY R$ after these measurements:
\begin{align}
\begin{gathered}
\rho^\mathrm{test}_{AB XY R} = \sum_{xy} \ptest_{xy} 
\rho^{xy}_{AB XY R}, \qquad
\rho^\mathrm{gen}_{AB XY R} = \sum_{xy} \pgen_{xy} 
\rho^{xy}_{AB XY R}, 
\\
\text{where } \rho^{xy}_{AB XY R} = \left(\mchann^A \otimes \mchann^B \otimes \idmap_{R} \right)\left[\rho^{xy}_{\qA \qB \lk^{B\to A} \lk^{A\to B} XY R}\right].
\end{gathered}
\label{eq:aftermeas}
\end{align}
(The $\rho^\mathrm{test}_{AB XY R}$ state as written above matches $\rho^\mathrm{test}_{\qA \qB \lk^\mathrm{all} XY R}$ on all registers that are present in both expressions, so there is no danger of ambiguity in using $\rho^\mathrm{test}$ to denote both; analogously for $\rho^\mathrm{gen}$ and for each $\rho^{xy}$.) 
Note that these states are classical on the registers $ABXY$. 
Finally, let us extend the states~\eqref{eq:aftermeas} to an additional classical register $S$ that is computed from $ABXY$; this register $S$ represents the value produced from this round that will be incorporated into the raw key of the protocol (we introduce this register for flexibility in our discussion, since roughly speaking in DIRE one typically takes $S=AB$, whereas in DIQKD one often takes $S=A$; see e.g.~\cite{BRC20,TSB+22}).

With this process in mind, the core single-round optimization problem that needs to be solved is as follows:
for any values $\constr_{ab|xy} \in \mathbb{R}$, we need to evaluate or lower-bound the optimization\footnote{Following standard conventions in optimization theory, if the optimization is infeasible then its value is taken as $+\infty$.}
\begin{align}
\label{eq:mainoptavg}
\begin{gathered}
\inf_{\omega, \lchann, \mchann^A, \mchann^B} H(S|XYR)_{\rho^\mathrm{gen}}\\
\suchthat \quad 
\rho^\mathrm{test}_{ABXY} = \sum_{abxy} \ptest_{xy}\constr_{ab|xy} \pure{abxy},
\end{gathered}
\end{align}
where the states $\rho^\mathrm{test},\rho^\mathrm{gen}$ are to be understood as functions of the state $\omega_{\qA \qB R}$ and the channels $\lchann, \mchann^A, \mchann^B$ via~\eqref{eq:afterleak}--\eqref{eq:aftermeas} (and those channels implicitly have the structure described in Sec.~\ref{sec:leakmodel}, including whichever leakage constraint we are considering). On an informal level, 
the objective value $H(S|XYR)_{\rho^\mathrm{gen}}$ roughly characterizes Eve's uncertainty about the raw-key value $S$, conditioned on a side-information register $R$ and the input values $XY$ (which are typically publicly announced at some point in DI protocols).
As for the constraints, the values $\ptest_{xy}\constr_{ab|xy}$ can roughly be thought of as characterizing the 
states $\rho^\mathrm{test}_{ABXY}$
produced by devices that we would accept with high probability in the protocol (for instance, in the IID asymptotic case, we can view them as being the values $\ptest_{xy}\constr^\mathrm{hon}_{ab|xy}$ that would be produced by the honest devices, and suppose that the protocol only accepts devices that exactly reproduce those values). 
However, a detailed discussion of how to convert an algorithm for solving this optimization into a full EAT-based security proof is beyond the scope of this work; we defer the details to e.g.~\cite{DF19,LLR+21,TSB+22,arx_BFF21} (refer to the sections on \term{crossover min-tradeoff functions}). 

We highlight that in the above discussion, the registers $\lk^{A\to E} \lk^{B\to E}$ do not in fact appear in the final optimization~\eqref{eq:mainoptavg}, despite the fact that we are allowing Eve to collect them as side-information. This is because we will be handling those registers separately, in the full-protocol analysis in Sec.~\ref{sec:fullprot}. Still, we note that if one makes an assumption that Eve's attack is IID, then it would indeed be possible to perform a security proof by directly including $\lk^{A\to E} \lk^{B\to E}$ in the conditioning registers in the analysis here. However, without this IID assumption, we would need to rely on other techniques such as the EAT, in which case the registers $\lk^{A\to E} \lk^{B\to E}$ potentially violate a Markov condition~\cite{DFR20,DF19} or no-signalling condition~\cite{arx_MFSR22} in the theorem, and hence it is not useful to compute a bound that already includes them at this step --- we instead handle them separately in the Sec.~\ref{sec:fullprot} analysis.

\begin{remark}\label{remark:IIDcase}
To elaborate further on the IID case: first let us specify that by ``IID attacks'', in this work we mean that in every round Eve independently generates and distributes the same state $\omega_{\qA \qB R}$ across the devices, and the measurement channels $\mchann_j$ and leakage channels $\lchann_j$ are also identical in every round, with the memory registers $\mA_j \mB_j$ and ``update channels'' all being trivial.
With this, the state produced 
after all measurements are performed will be of the form 
$(\gamma \rho^\mathrm{test}_{AB XY  \lk^{A\to E} \lk^{B\to E} R} + (1-\gamma)\rho^\mathrm{gen}_{AB XY  \lk^{A\to E} \lk^{B\to E} R})^{\otimes n}$.
In that case, to compute the keyrate it basically suffices (again, due to the AEP~\cite{TCR09}; see~\cite{TSB+22} for a detailed explanation of the security proof structure) to have a method to evaluate the optimization~\eqref{eq:mainoptavg} except with $H(S|XY \lk^{A\to E} \lk^{B\to E} R)_{\rho^\mathrm{gen}}$ as the objective function, which can be achieved using the same arguments as we have given above. With $\lk^{A\to E} \lk^{B\to E}$ included in the conditioning registers of this optimization (and given the IID structure), it is not necessary to separately handle those registers in the subsequent analysis, i.e.~the Sec.~\ref{sec:fullprot} analysis can be omitted in this scenario.
\end{remark}

\subsection{Relaxing the optimization}
\label{sec:relaxedopt}

The optimization~\eqref{eq:mainoptavg} is not straightforward to solve directly, because the leakage channel $\lchann$ may potentially have some complicated structure. (Furthermore, the optimization over the states \emph{and} measurements makes it a nonconvex problem with systems of unbounded dimension, though techniques have been developed~\cite{NPA08,arx_BFF21} to address these issues in the context of Bell nonlocality and DI cryptography, which we shall also be relying on subsequently.) However, we shall now discuss methods to relax it to more tractable versions, for each of the leakage models discussed in~\ref{sec:leakmodel}.

We begin by first rewriting the optimization~\eqref{eq:mainoptavg} slightly: note that in the constraint, the $\pure{xy}$ terms are orthogonal for distinct $(x,y)$ values, and hence that constraint is equivalent to having an individual constraint for each $(x,y)$ value. Written in the latter form, the factors of $\ptest_{xy}$ and the $\pure{xy}$ terms can be ``cancelled off'' in the constraints, allowing us to write the optimization as
\begin{align}
\label{eq:mainoptxy}
\begin{gathered}
\inf_{\omega, \lchann, \mchann^A, \mchann^B} H(S|XYR)_{\rho^\mathrm{gen}}\\
\suchthat \quad 
\rho^{xy}_{AB} = \sum_{ab} \constr_{ab|xy} \pure{ab} \quad \forall x,y .
\end{gathered}
\end{align}

With this, we now discuss the bounded-weight leakage model. Our key observation for this model is that since we have the constraint that measuring the systems $\lk^{A\to B} \lk^{B\to A}$ (in an appropriate basis) would return the outcome $\pure{\gnd}^{\otimes 2}$ with probability at least $1-\dleak$, we can apply a slight modification of the Gentle Measurement Lemma (see Appendix~\ref{app:GML}) to conclude that the states $\rho^{xy}$ after applying the leakage channel have the following property:
\begin{align} 
F\left(\rho^{xy}_{\qA \lk^{B\to A} \qB \lk^{A\to B} XYR}, \rho^{xy}_{\qA \qB XYR } \otimes \pure{\gnd}^{\otimes 2}\right) \geq 1-\dleak
.
\label{eq:closeness}
\end{align}
In other words, they are in fact ``close'' to the states that would be produced if we simply discard the leakage registers $\lk^{A\to B} \lk^{B\to A}$ and re-initialize them in the fixed state $\pure{\gnd}^{\otimes 2}$. By concavity of fidelity, analogous bounds hold for the states $\rho^\mathrm{test},\rho^\mathrm{gen}$, which are mixtures of the states $\rho^{xy}$. (Here we began by presenting the bound~\eqref{eq:closeness} for each $\rho^{xy}$ rather than the mixtures $\rho^\mathrm{test},\rho^\mathrm{gen}$, because this will allow us to impose sharper constraints in our final optimization --- this is also basically the reason we first rewrote the optimization~\eqref{eq:mainoptavg} in the form~\eqref{eq:mainoptxy}.)

The above property implies that we can lower-bound the optimization~\eqref{eq:mainoptxy} by noting that the states involved in it are ``close'' to some other states where effectively no leakage has occurred, after which we can handle the latter using standard techniques for DI optimizations. 
More precisely, we see that~\eqref{eq:mainoptxy} is lower-bounded by the following optimization, which is written entirely in terms of some states $\sigma$ that are produced without leakage (this result is quite intuitive, but we provide a detailed derivation in Appendix~\ref{app:bndwtopt}):
\begin{align}
\label{eq:bndwtopt}
\begin{gathered}
\inf_{\omega, \widetilde{\mchann}^A, \widetilde{\mchann}^B} H(S|XYR)_{\sigma^\mathrm{gen}} - \fcont(\dleak)\\
\suchthat \quad F\!\left(\sum_{ab} \constr_{ab|xy} \pure{ab}, \sigma^{xy}_{AB}\right) \geq 1-\dleak \quad \forall x,y,
\end{gathered}
\end{align}
where $\widetilde{\mchann}^A, \widetilde{\mchann}^B$ are measurement channels analogous to $\mchann^A, \mchann^B$ except \emph{without} involving leakage registers, and the states $\sigma^\mathrm{gen},\sigma^{xy}$ are defined in terms of these channels and the state $\omega$ in an analogous fashion to~\eqref{eq:afterleak}--\eqref{eq:aftermeas}, omitting the leakage processes (see~\eqref{eq:localmeas}--\eqref{eq:localmeasgen} in Appendix~\ref{app:bndwtopt} for an exact formula).
The function $\fcont$ in the objective is required to be a (uniform) {continuity bound} for conditional entropies, in the following sense: for any states $\rho,\sigma$ with $F(\rho,\sigma) \geq 1-\delta$, we have $\left|H(S|XYR)_{\rho} - H(S|XYR)_{\sigma}\right| \leq \fcont(\delta)$. Such continuity bounds have been a subject of some previous interest~\cite{Win16,SBV+21}; for the purposes of this work, we use the following approach: by the Fuchs--van de Graaf inequalities, the constraint $F(\rho,\sigma) \geq 1-\delta$ implies $d(\rho,\sigma) \leq \sqrt{1-(1-\delta)^2} = \sqrt{2\delta - \delta^2}$, and hence the trace-distance-based continuity bound in~\cite{Win16} lets us take
\begin{align}
\label{eq:fcont}
\fcont(\delta) = t \log\dim(S) + (1+t) \binh\left(\frac{t}{1+t}\right), \text{ where } t=\sqrt{2\delta - \delta^2}.
\end{align}
(The intermediate conversion to trace distance here makes this approach somewhat suboptimal; however, this appears to be the best bound we can obtain using only existing results --- we return to this point at the end of Sec.~\ref{sec:1rnddata} for further discussion.)

The optimization~\eqref{eq:bndwtopt} can now be tackled, because a substantial body of work~\cite{NPA08,NPS14,BSS14,arx_BFF21} in DI cryptography has been developed on the topic of solving optimizations (without leakage) of the form
\begin{align}
\begin{gathered}
\inf_{\omega, \widetilde{\mchann}^A, \widetilde{\mchann}^B} H(S|XYR)_{\sigma^\mathrm{gen}} \\
\suchthat \quad \sigma^{xy}_{AB} = \sum_{ab} \constr_{ab|xy} \pure{ab} \quad \forall x,y,
\end{gathered}
\label{eq:NPAopt}
\end{align}
where the channels $\widetilde{\mchann}^A, \widetilde{\mchann}^B$ and states $\omega,\sigma^{xy}$ have the same structure as in~\eqref{eq:bndwtopt}.
In particular, the techniques in~\cite{NPS14,BSS14,arx_BFF21} transform the objective function in the above optimization problem in such a way that the optimization over states and measurements can be lower-bounded using semidefinite programming (SDP) bounds developed in~\cite{NPA08}. 
Compared to our optimization~\eqref{eq:bndwtopt}, the only significant difference is that we have a looser fidelity-based constraint instead of an exact equality constraint (the $\fcont(\dleak)$ term is simply a constant in the optimization and hence does not introduce any difficulties). Fortunately, this fidelity-based constraint indeed has an SDP formulation (see Appendix~\ref{app:fidSDP}), and hence we can apply the SDP techniques in~\cite{NPS14,BSS14,arx_BFF21} to evaluate our optimization~\eqref{eq:bndwtopt}.

As for the classical-probabilistic leakage model, the analysis is easier: by similar arguments to the above, note that the channel structure~\eqref{eq:classprob} implies we can write (for each $(x,y)$ value)
\begin{align}
\rho^{xy}_{ABXYR} = (1-\dleak) \sigma^{xy}_{ABXYR} + \dleak \hat{\sigma}^{xy}_{ABXYR},
\label{eq:mixture}
\end{align}
for some states $\sigma^{xy}$ produced without leakage (more precisely, of the form presented in~\eqref{eq:localmeas}) and some other uncharacterized\footnote{Strictly speaking, the channel structure $\lchann = \lchann^A \otimes \lchann^B$ imposes some constraints on the channel $\hat{\lchann}$ in~\eqref{eq:classprob}, which could in principle slightly constrain the ``uncharacterized'' states $\hat{\sigma}^{xy}_{ABXYR}$ (and thus the values $\hat{\constr}_{ab|xy}$ in the optimization~\eqref{eq:clpropt} below). However, we will not attempt to analyze this in detail here.}  states $\hat{\sigma}^{xy}$. 
Again, a similar decomposition holds for the states $\rho^\mathrm{test},\rho^\mathrm{gen}$ as well, and by concavity of conditional entropy, the decomposition for the latter then implies we have $H(S|XYR)_{\rho^\mathrm{gen}} \geq (1-\dleak)H(S|XYR)_{\sigma^\mathrm{gen}}$. Putting things together, we can thus relax the optimization~\eqref{eq:mainoptxy} in this case\footnote{Here it does not really matter whether we consider the original optimization~\eqref{eq:mainoptavg} or the rewritten version~\eqref{eq:mainoptxy} that has individual constraints for each $(x,y)$; both approaches yield equivalent results in this case because the constraints in~\eqref{eq:clpropt} are affine and the states $\pure{xy}_{XY}$ are orthogonal.} to
\begin{align}
\label{eq:clpropt}
\begin{gathered}
\inf_{\omega, \widetilde{\mchann}^A, \widetilde{\mchann}^B, \hat{\constr}_{ab|xy} \in \mathbb{R}_{\geq 0}} (1-\dleak) H(S|XYR)_{\sigma^\mathrm{gen}} \\
\begin{aligned}
\suchthat \quad & (1-\dleak) \sigma^{xy}_{AB} 
= \sum_{ab} (\constr_{ab|xy}-\dleak \hat{\constr}_{ab|xy}) \pure{ab}
\text{\; and \;}
\sum_{ab} \hat{\constr}_{ab|xy} = 1 \quad \forall x,y,
\end{aligned}
\end{gathered}
\end{align}
where the states $\sigma^\mathrm{gen},\sigma^{xy}$ are again defined via~\eqref{eq:localmeas}--\eqref{eq:localmeasgen}, and $\hat{\constr}_{ab|xy}$ are some non-negative scalars (basically, the constraints simply arise from writing out the classical state $\hat{\sigma}^\mathrm{test}_{ABXY}$ in its explicit form $\hat{\sigma}^\mathrm{test}_{ABXY} = \sum_{abxy} \ptest_{xy}\hat{\constr}_{ab|xy} \pure{abxy}$ for some unknown conditional probabilities $\hat{\constr}_{ab|xy}$).

\subsection{Numerical results}
\label{sec:1rnddata}

As a small demonstration of our method, and to get a qualitative sense of how the value of $\dleak$ affects the single-round entropies, we now evaluate the optimizations~\eqref{eq:bndwtopt} and~\eqref{eq:clpropt} for some choices of the constraint values $\constr_{ab|xy}$. This is meant to be just a simple example; we do not aim to address the full range of implementations that have been used in DI protocol demonstrations, since in any case the appropriate choice of $\constr_{ab|xy}$ for a given experiment would depend on the details of the implementation. (Our approach does not place any particular requirements on $\constr_{ab|xy}$, so it should generically be usable for any choice of those values, as long as the input and output sizes are not so large that the~\cite{NPA08} SDPs become intractable.)

Specifically, we follow e.g.~\cite{PAB+09,ARV19} and consider distributions $\constr_{ab|xy}$ with binary-valued outputs $a,b\in\{0,1\}$ that are produced by measurements on a \term{Werner state}, parametrized by a depolarizing-noise value $\q\in[0,1/2]$ (this parametrization arises from the fact that if matching Pauli measurements in the $X$-$Z$ plane of the Bloch sphere are performed on the qubits in the Werner state, then the probability of getting different outcomes is simply $\q$):
\begin{align}
\rho^\mathrm{W}_{\q} \defvar (1-2\q)\pure{\Phi^+} + 2\q\, \frac{\id}{4}, \quad \text{where } \ket{\Phi^+} \defvar \frac{1}{\sqrt{2}} \left(\ket{00}+\ket{11}\right).
\label{eq:werner}
\end{align}
We consider two choices of measurements on this state, with different numbers of inputs (i.e.~measurement settings) in each. Firstly, we consider a scenario with $4$ inputs each for Alice and Bob, 
where each input value corresponds to the same measurement by either Alice or Bob, namely a (rotated) Pauli measurement in the $X$--$Z$ plane at the following polar angles $\theta$:
\begin{align}
\begin{gathered}
x=0: \theta = 0, \qquad x=1: \theta = \frac{\pi}{4}, \qquad x=2: \theta = \frac{\pi}{2}, \qquad x=3: \theta = \frac{3\pi}{4}, \\
y=0: \theta = 0, \qquad y=1: \theta = \frac{\pi}{4}, \qquad y=2: \theta = \frac{\pi}{2}, \qquad y=3: \theta = \frac{3\pi}{4}.
\end{gathered}
\label{eq:measMYCHSH}
\end{align}
This can be viewed as a combination of the Mayers-Yao self-test~\cite{MY98} with the measurements that maximize the violation of the CHSH inequality (up to sign conventions)~\cite{CHSH69}, or alternatively as having both Alice and Bob perform the latter measurements. Secondly, we consider a simpler situation with just $2$ inputs each for Alice and Bob, corresponding to (rotated) Pauli measurements at the following angles in the $X$--$Z$ plane:
\begin{align}
\begin{gathered}
x=0: \theta = 0, \qquad x=1: \theta = \frac{\pi}{2}, \\
y=0: \theta = \frac{\pi}{4}, \qquad y=1: \theta = \frac{3\pi}{4}.
\end{gathered}
\label{eq:measCHSH}
\end{align}
These measurements are the ones that maximize the violation of the CHSH inequality, and are very commonly studied in DI security proofs.

As for the choices of the testing probabilities $\ptest_{xy}$ and $\pgen_{xy}$, in both scenarios we take $\ptest_{xy}$ to be uniform over all inputs, and we set $\pgen_{00}=1$ and $\pgen_{xy}=0$ otherwise, i.e.~the only inputs we use in the generation rounds are $x=y=0$. We also set the register $S$ to just be equal to $A$, i.e.~the relevant quantity in DIQKD protocols. With these choices, the objective function can be written as $H(A|R;X=Y=0)$. 
We plot our resulting bounds on the optimizations~\eqref{eq:bndwtopt},~\eqref{eq:clpropt} in Figs.~\ref{fig:MYCHSH}--\ref{fig:CHSH}.
Regarding the details of the numerical computations: we used the SDP approaches from~\cite{NPS14,BSS14} which lower-bound $H(A|R;X=Y=0)$ in terms of guessing probability, and for the 4-input scenario~\eqref{eq:measMYCHSH} we used local level 1 of the~\cite{NPA08} hierarchy, while for the 2-input scenario~\eqref{eq:measCHSH} we used local level 2 of the~\cite{NPA08} hierarchy.

\begin{figure}[t]
\centering
\subfloat[]{
\includegraphics[width=0.49\textwidth]{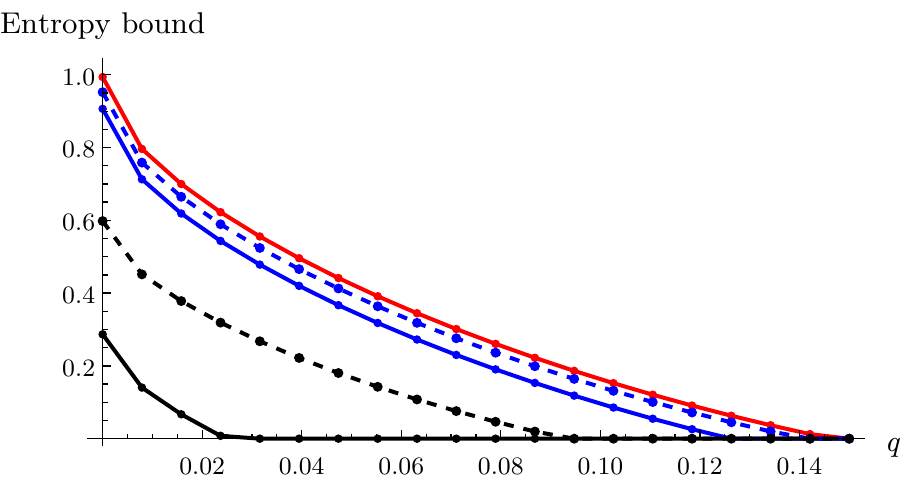}
} 
\subfloat[]{
\includegraphics[width=0.49\textwidth]{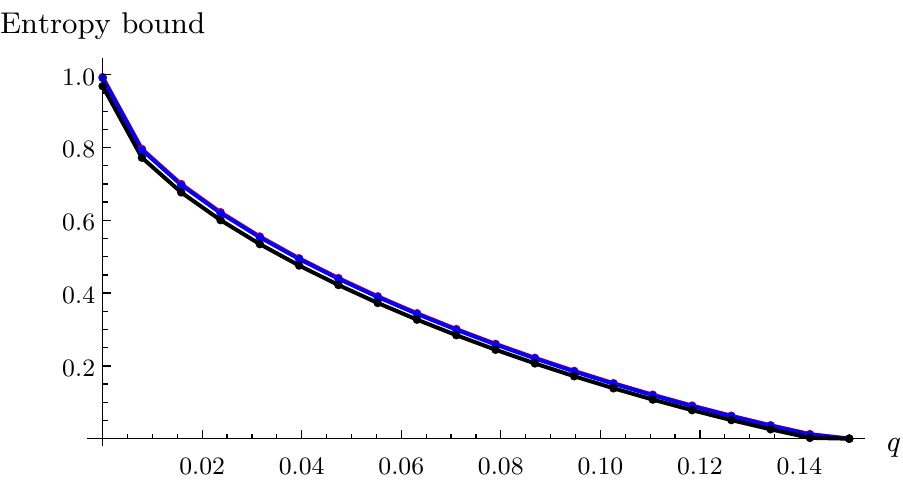}
}
\caption{
Lower bounds on $H(A|R;X=Y=0)$ (via guessing probability) for a 4-input 2-output scenario given by
the measurements~\eqref{eq:measMYCHSH} on a Werner state with
depolarizing noise $\q$.
In each plot the solid black, blue and red curves respectively denote the results for (a) the bounded-weight leakage model with $\dleak=10^{-3},10^{-5},0$; (b) the classical-probabilistic leakage model with $\dleak=10^{-2},10^{-3},0$ (note that we have chosen quite different orders of magnitude for $\dleak$ in the two models, because the latter is significantly more robust against leakage, as can be seen from the plots). For the bounded-weight model, we have also plotted dashed curves that show the results if the continuity-bound term $\fcont(\dleak)$ is omitted from the optimization~\eqref{eq:bndwtopt}; it can be seen that this term has a very significant impact.
}
\label{fig:MYCHSH}
\vspace{5mm}
\centering
\subfloat[]{
\includegraphics[width=0.49\textwidth]{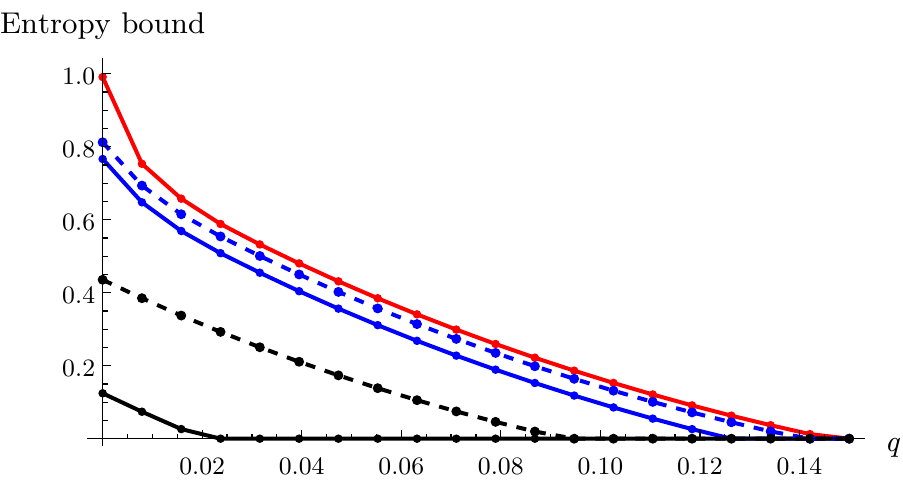}
} 
\subfloat[]{
\includegraphics[width=0.49\textwidth]{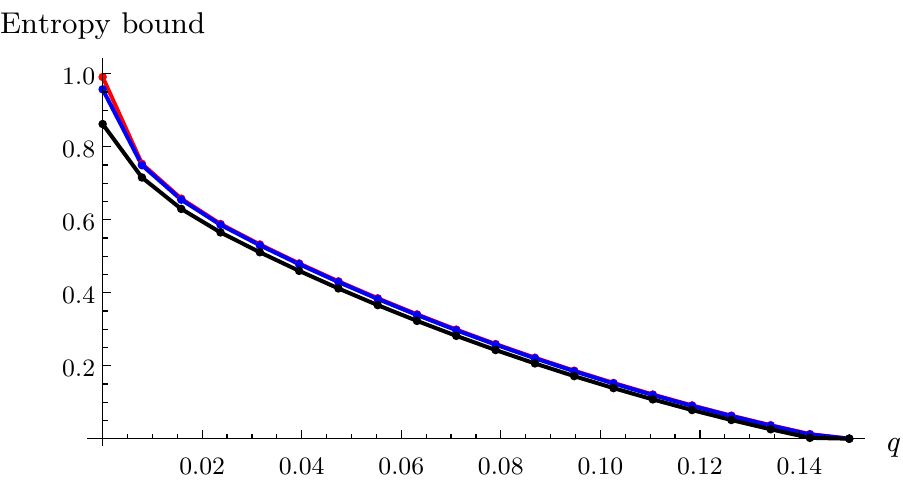}
}
\caption{
Analogous to Fig.~\ref{fig:MYCHSH}, except for a 2-input 2-output scenario given by
the measurements~\eqref{eq:measCHSH}, implemented on a Werner state with
depolarizing noise $\q$ as well.
}
\label{fig:CHSH}
\end{figure}

From the plots, it can be seen that our entropy bounds for the bounded-weight leakage model are lower than those for the classical-probabilistic leakage model (given the same $\dleak$ value). This is as expected since the latter is a special case of the former; however, it is noteworthy that the difference turned out to be quite dramatic --- for instance at $\dleak=10^{-3}$ the entropy in the bounded-weight model is already very low, while that in the classical-probabilistic model is only slightly affected. This indicates that under the bounded-weight model, we cannot tolerate particularly large values of $\dleak$ before the results become trivial, at least with the methods proposed here. (Similar behaviour was observed in the analysis of prepare-and-measure scenarios in~\cite{PPWT22}, though the analysis in that work was mostly based on trace distance rather than fidelity.)

Still, from the plots it can also be seen that the continuity-bound term $\fcont(\dleak)$ is playing a significant role in reducing the entropy in the bounded-weight model. This is likely because the approach we have used here to obtain the continuity bound $\fcont(\dleak)$ is rather suboptimal --- firstly, despite originally having a bound on the fidelity, we took an intermediate step of converting it to trace distance; secondly, working with the trace distance has an inherent disadvantage that the von Neumann entropy is not Lipschitz continuous with respect to trace distance (i.e.~viewing the formula~\eqref{eq:fcont} as a function of trace distance $t$, it grows faster than any linear function at small values of $t$, and this scaling behaviour is essentially unavoidable~\cite{Win16}). These points combined led to a continuity bound $\fcont(\dleak)$ that scales approximately as $O(\binh(\sqrt{2\dleak}))$ (taking the dominant terms in~\eqref{eq:fcont} at small $\dleak$).
If we had directly worked with the lower bound of $1-\dleak$ on the fidelity, some results in~\cite{SBV+21,arx_LKA+22} heuristically suggest that it might be possible to instead obtain a continuity bound that is Lipschitz with respect to the angular distance $\acos(1-\dleak)$, i.e.~$\fcont(\dleak)$ would scale as $O(\acos(1-\dleak)) \approx O(\sqrt{2\dleak})$. This is significantly better than the $O(\binh(\sqrt{2\dleak}))$ scaling in~\eqref{eq:fcont}, and would lead to tighter bounds. However, it has not been rigorously proven that such a Lipschitz continuity bound (with respect to angular distance) holds when the conditioning systems are quantum; 
the results here indicate that deriving such a result in future work could be very useful in improving the keyrates from this approach.

\section{Security of full protocol}
\label{sec:fullprot}

We now turn to the question of ensuring security of the entire protocol. Informally, the idea here is to argue that with the constraints on the leakage systems, they cannot reveal ``too much'' additional information to Eve as compared to a situation where no leakage occurs. However, formalizing this intuition turns out to be rather technical, and we begin by laying out some of the required foundations. 

Recall we have the requirement that the protocol ends in a privacy amplification step, where the parties process their data (after all the other classical post-processing steps) into an ideal secret key. Let $\str{S}$ denote the string to be processed in that step, and let $E_\mathrm{all}$ denote all the side-information that Eve holds at that point. It is known that the length of secret key that can be produced through privacy amplification is essentially characterized~\cite{RW05}\footnote{Recent work~\cite{arx_Dup21} has provided an alternative characterization in terms of {{\Renyi} entropies}, but we defer further discussion of this version to the {conclusion}.} by the (conditional) {smooth min-entropy} $\Hmin^{\es}\left(\str{S}| E_\mathrm{all} \right)$ (Definition~\ref{def:Hsmooth}). 
(To be precise, in a full security proof this entropy should be evaluated for the state {conditioned} on the protocol accepting, and there are subtleties in choosing the security definitions in a manner compatible with rigorous analysis of this conditioning. However, a detailed discussion is beyond the scope of this work; refer to e.g.~\cite{TL17,arx_tanthesis,arx_PR21} for further discussion.)

In the standard DI scenario where there is no leakage, a security proof can typically proceed by taking $E_\mathrm{all}$ to consist of two parts: the register $\En$ representing the side-information Eve holds immediately after Alice and Bob have collected all their device outputs and announced their input choices (see Appendix~\ref{app:EAT} for a discussion of some technical details on this point), 
and a register $\str{P}$ holding all the public communication after that point. (In the case of DIRE, $\str{P}$ is typically small or trivial, but we include it here to maintain generality for DIQKD.)
The main contribution of the EAT is that it provides a lower bound on $\Hmin^{\es}\left(\str{S}| \En \right)$ (given a method to bound the single-round optimization~\eqref{eq:mainoptavg}).
This is then converted into a lower bound on the quantity of interest $\Hmin^{\es}\left(\str{S}| E_\mathrm{all} \right) = \Hmin^{\es}\left(\str{S}| \En \str{P} \right)$, by simply lower bounding it with $\Hmin^{\es}\left(\str{S}| \En \right) - \log\dim(\str{P} )$ using a chain rule~\cite{WTHR11,Tom16} 
for smooth min-entropy. 

In our context, it would seem we could modify the above approach by simply appending the string of leakage registers $\str{\lk}^{A\to E} \str{\lk}^{B\to E}$ (collected by Eve) to $E_\mathrm{all}$, and aim to find a way to relate $\Hmin^{\es}\left(\str{S}| \str{\lk}^{A\to E} \str{\lk}^{B\to E} \En \str{P} \right)$ to $\Hmin^{\es}\left(\str{S}| \En \str{P} \right)$ (since by the above outline, an EAT-based security proof already provides methods to bound the latter). However, this runs into a subtle difficulty --- since the leakage registers $\lk^{A\to E}_j \lk^{B\to E}_j$ in each round are immediately leaked to Eve, in principle she could perform a joint operation on these registers and the side-information she holds at that point, in order to generate the states sent to Alice and Bob's devices in the next round. Such an operation could potentially couple the registers $\lk^{A\to E}_j \lk^{B\to E}_j$ to $\qA_{j+1} \qB_{j+1}$ in some complicated way, which prevents one from applying the EAT to bound $\Hmin^{\es}\left(\str{S}| \En \str{P} \right)$ 
(see Appendix~\ref{app:EAT} for a specialized discussion of the details).
More trivially, it could also change the state on the registers $\lk^{A\to E}_j \lk^{B\to E}_j$ in some uncharacterized fashion, making it hard to preserve any particular structure for the state on those registers at the end.

Hence in this work, we impose a form of ``restricted adaptiveness'' assumption on Eve, as was described in the leakage model in Sec.~\ref{sec:leakmodel}.
Specifically, we have supposed in that model that Eve's strategy consists of gathering the leakage registers $\lk^{A\to E}_j \lk^{B\to E}_j$ in each round, but not operating further on them until all the state distribution and measurement steps have been completed. 
We note that this condition would indeed be plausible in, for instance, a DIRE implementation where the entire process of state generation and measurement takes place within a ``shielded'' lab where the leakage between the devices and out of the lab can be constrained (but apart from this constraint, the honest parties do not have any \emph{a priori} certification of the states or measurements), and any other side-information Eve can hold about the measurement outcomes must come from some (possibly quantum) extension that she kept before distributing the devices. Such a context has been considered in for instance~\cite{PM13} (though only for zero leakage rather than bounded leakage, and focused on classical side-information), and is similar to the usage contexts for existing QRNG devices (though those are based on characterized states and/or measurements). On the other hand, in the context of DIQKD it is perhaps harder to justify this assumption, since in this setting one usually considers Alice and Bob to be receiving their states from an untrusted source. Still, we note that if an adaptive attack for Eve can be modelled via some state preparation process using the memory registers $\mA_j \mB_j$ described in Sec.~\ref{sec:leakmodel}, for instance if there is a way to ``partition off'' the parts of the leakage that encode the adaptive behaviour and include them in $\mA_j \mB_j$ rather than $\lk^{A\to E}_j \lk^{B\to E}_j$, then our model would be sufficient to cover it as well. 

With this restriction, 
our task is then to bound 
the smooth min-entropy of $\str{S}$ conditioned on $\str{\lk}^{A\to E} \str{\lk}^{B\to E} \En \str{P}$, 
where the registers $\str{\lk}^{A\to E} \str{\lk}^{B\to E}$ were not modified after they were initially generated. Intuitively, we would expect that it should be possible to compensate for the leakage registers by subtracting some amount from $\Hmin^{\es}\left(\str{S}|\En \str{P} \right)$ (which can be bounded using the EAT-based analysis as described above). However, the simple dimension-based chain rule that was used to handle the $\str{P}$ register is not sufficient in this context, because the log-dimension of $\str{\lk}^{A\to E} \str{\lk}^{B\to E}$ could easily be larger than $\Hmin^{\es}\left(\str{S}|\En \str{P} \right)$ (for instance simply if each $\lk^{A\to E}_j$ and/or $\lk^{B\to E}_j$ register is the same dimension as $S_j$). Instead, we use some different chain rules~\cite{VDT13,Tom16} involving the {smooth max-entropy} (Definition~\ref{def:Hsmooth}).
Specifically, introducing the notation
\begin{align}
\smf{p}\defvar 
-\log\left(1-\sqrt{1-p^2}\right)
\leq \log
\frac{2}{p^2}
,
\end{align}
if we take any values $\es,\esp,\tau,\esL \in (0,1)$ such that $\esp=\es+2\tau+4\esL$, the following inequality holds (for any state on some registers $QQ'Q''$):
\begin{align}
\Hmin^{\esp}(Q|Q'Q'') &\geq \Hmin^{\es+\tau+2\esL}(QQ'|Q'') - \Hmax^{\esL}(Q'|Q'') - 2\smf{\tau} 
&\text{(Eq.~(6.56) of~\cite{Tom16})} 
\nonumber
\\
&\geq \Hmin^{\es}(Q|Q'') + \Hmin^{\esL}(Q'|QQ'') - \Hmax^{\esL}( Q'|Q'') - 3\smf{\tau} 
&\text{(Eq.~(6.55) of~\cite{Tom16})} 
\nonumber
\\
&\geq \Hmin^{\es}(Q|Q'') - 2\Hmax^{\esL}( Q') - 3\smf{\tau} , & \label{eq:chainHminHmax}
\end{align}
where the last line holds by applying a duality relation $\Hmin^{\esL}(Q'|QQ'') = -\Hmax^{\esL}(Q'|Q''')$~\cite{Tom16} (where $Q'''$ is any purification of $QQ'Q''$) and a data-processing property $\Hmax^{\esL}(Q'|R) \leq \Hmax^{\esL}(Q')$. (In principle one could of course choose the parameters
in the two chain rules separately,
arriving at a final bound $\Hmin^{\esp}(Q|Q'Q'') \geq \Hmin^{\es}(Q|Q'') - \Hmax^{\esL}(Q') - \Hmax^{\esL'}( Q')- 2\smf{\tau} - \smf{\tau'} $ for $\esp=\es+\tau+\tau'+2\esL+2\esL'$, but it seems unclear if this provides any benefit --- it essentially comes down to whether the value of $\Hmax^{\esL}(Q') + \Hmax^{\esL'}( Q')$ subject to an upper bound on $\esL+\esL'$ is minimized by taking $\esL=\esL'$, and analogously for $2\smf{\tau} + \smf{\tau'}$.)

\begin{remark}
If the system $Q'$ is classical, then a sharper result is possible since in that case we have $\Hmin^{\es}(QQ'|Q'') \geq \Hmin^{\es}(Q|Q'')$ (Lemma~6.7 of~\cite{Tom16}), 
so we can basically stop after the first inequality in~\eqref{eq:chainHminHmax} and end up subtracting only about $\Hmax^{\esL}( Q')$ instead of $2\Hmax^{\esL}( Q')$.
\end{remark}

\newcommand{\rhon}{\rho_n}
\newcommand{\rhoPE}{\rho_{|\mathrm{PE}}}
\newcommand{\pPE}{p_{\mathrm{PE}}}
\newcommand{\epsPE}{\eps_{\mathrm{PE}}}
In our context, let $\rhon$ denote the state just before privacy amplification, and let $\rhoPE$ denote that state conditioned (with normalization) on the event that the protocol accepted during the parameter-estimation step.\footnote{As briefly mentioned at the start of this section, when performing a security proof what we finally need would instead be the smooth min-entropy of the state conditioned on \emph{all} steps of the protocol accepting. However, we will not further discuss here how to handle conditioning on any additional events; techniques to handle this are described in e.g.~\cite{TL17} (Lemma~10) or \cite{TSB+22} (Sec.~4.2).} By applying the above result twice\footnote{Alternatively, we could just apply it a single time, identifying all the registers $\str{\lk}^{A\to E} \str{\lk}^{B\to E}$ with $Q'$. Our subsequent analysis should also essentially be able to provide upper bounds on $\Hmax^{\esL}\left(\str{\lk}^{A\to E} \str{\lk}^{B\to E}\right)$ with appropriate modifications (e.g.~the choice of value for the dimension bound in Sec.~\ref{sec:dimbnd} may change, or the Hamiltonian used in Sec.~\ref{sec:Ebnd}). 
Whether this alternative approach yields better results would seem to depend on the details of the protocol setup and parameter choices, so we do not discuss it in further depth within this work.
}, we see that for any $\es, \esp,\tau,\esL \in (0,1)$ such that $\esp=\es+4\tau+8\esL$ (again, it would be possible to choose different smoothing parameters 
in each use of the bound,
but we omit this here for brevity), we have
\begin{align}\label{eq:chainleak}
\Hmin^{\esp}\left(\str{S}|\str{\lk}^{A\to E} \str{\lk}^{B\to E} \En \str{P} \right)_{\rhoPE} 
\geq \Hmin^{\es}(\str{S}|\En \str{P} )_{\rhoPE} - 2\Hmax^{\esL}\left(\str{\lk}^{A\to E}\right)_{\rhoPE} - 2\Hmax^{\esL}\left(\str{\lk}^{B\to E}\right)_{\rhoPE} - 6\smf{\tau}.
\end{align}
In other words, this basically means that we can compensate for the registers $\str{\lk}^{A\to E} \str{\lk}^{B\to E}$ by subtracting (twice) their smooth max-entropies 
along with an additional constant ``correction'' term $6\smf{\tau}$ (as well as slightly changing the smoothing parameter, which slightly worsens the final secrecy properties of the key, but not too much --- see the Leftover Hashing Lemma in e.g.~\cite{rennerthesis,TL17} for details). 
Since the EAT can be used to prove that the $\Hmin^{\es}(\str{S}|\En \str{P} )_{\rhoPE}$ term is of order $\Omega(n)$ in typical protocols, the $6\smf{\tau}$ term is an almost-negligible correction, and we do not consider it in further detail.
Our task is hence reduced to upper-bounding the smooth max-entropies $\Hmax^{\esL}\left(\str{\lk}^{A\to E}\right)_{\rhoPE}$ and $\Hmax^{\esL}\left(\str{\lk}^{B\to E}\right)_{\rhoPE}$. Since these terms have basically the same structure in our model, in the remainder of this section we shall for brevity use $\str{\lkE}$ to denote either $\str{\lk}^{A\to E}$ or $\str{\lk}^{B\to E}$, and describe how to bound $\Hmax^{\esL}\left(\str{\lkE}\right)_{\rhoPE}$.

The approach we shall use to bound this smooth max-entropy is to relate it to the {\Renyi} entropies (Definition~\ref{def:Renyi}), using some intermediate results that were proven in the derivation of the EAT in~\cite{DFR20}, as well as some security proof techniques used in e.g.~\cite{rennerthesis,ARV19,TSB+22}.
Specifically, let $\pPE$ be the (unknown) probability that the state was accepted during parameter estimation. Then for any value $\epsPE\in(0,1)$, one of the following must be true:
\begin{itemize}
\item Either $\pPE \leq \epsPE$, in which case the protocol's security condition is trivially satisfied (see e.g.~\cite{ARV19,TSB+22} for details --- the more precise claim would be that the secrecy condition holds with secrecy parameter $\epsPE$) and hence we do not discuss it further here; 
\item Or $\pPE > \epsPE$, in which case for any $\alpha\in[1/2,1)$ we have
\begin{align}
\Hmax^{\esL}\left(\str{\lkE}\right)_{\rhoPE} &\leq H_\alpha(\str{\lkE})_{\rhoPE} +
\frac{\smf{\esL}}{1/\alpha - 1} 
\nonumber\\
&\leq H_\alpha(\str{\lkE})_{\rhon} +
\frac{\log(1/\epsPE)}{1/\alpha - 1} +
\frac{\smf{\esL}}{1/\alpha - 1}
\nonumber\\
&\leq \sum_{j=1}^n \sup_\omega H_\alpha(\lkE_j)_{\lchann_j[\omega]}
+ \frac{\log(1/\epsPE)+\smf{\esL}}{1/\alpha - 1}, \label{eq:chainRenyi}
\end{align}
where we have used the following results from~\cite{DFR20}: the first line is Lemma~B.10 
(which holds for $\alpha\in[1/2,1)$),
the second line is Lemma~B.5\footnote{It would also have been possible to apply Lemma~B.6 instead, but that would yield slightly worse dependence on the Renyi parameter in this context.}
(which holds for $\alpha\in(0,1)$) together with the bound $\pPE > \epsPE$,
and the third line follows from a chain rule for {\Renyi} entropies\footnote{We cannot simply claim that $H_\alpha(\str{\lkE})$ is upper bounded by $\sum_{j=1}^n H_\alpha(\lkE_j)$ (without the supremum over input states to the channels), because the {\Renyi} entropies are not subadditive.} presented as Corollary~3.5 in that work (see also Appendix~\ref{app:EAT} below), with the supremum in the last expression taking place over all input states to the channel $\lchann_j$.
\end{itemize}
For the purposes of our analysis, $\alpha\in[1/2,1)$ can be considered a free parameter that should be chosen to optimize the upper bound on $\Hmax^{\esL}\left(\str{\lkE}\right)_{\rhoPE}$. 
To estimate the scaling of this bound at large $n$, we can follow~\cite{DFR20} and take $1-\alpha \propto 1/\sqrt{n}$, so we have $1/({1/\alpha - 1}) 
\leq 1/({1-\alpha}) 
= O(\sqrt{n})$, and for the dimension-bounded case\footnote{This analysis does not carry over to the energy-bounded case with no dimension bound, because the $O(1-\alpha)$ continuity bound presented in~\cite{DFR20} has a dependence on $\dim(\lkE_j)$ (or at least the dimension of the support of the state, due to a $H_0$ term).
In fact there exist infinite-dimensional
states with infinite {\Renyi} entropy for all $\alpha<1$ but finite von Neumann entropy; whether such states can be ruled out would depend on the Hamiltonian describing a given implementation.
} each $H_\alpha(\lkE_j)$ term converges to $H(\lkE_j)$ on order $O(1-\alpha) = O(1/\sqrt{n})$, by the continuity bounds in~\cite{DFR20}. Accounting for the sum over $j$, this means our upper bound in this case scales as $
\left(\sum_{j=1}^n \sup_\omega H(\lkE_j)_{\lchann_j[\omega]}\right) + O(\sqrt{n})$, roughly similar to the AEP~\cite{TCR09}. (In fact we could also have obtained a bound with this scaling for the dimension-bounded case by directly applying the final EAT result~\cite{DFR20}; however, the approach we use here yields tighter results as we shall analyze $H_\alpha(\lkE_j)$ directly rather than taking an intermediate step of relating it to $H(\lkE_j)$.)

With the bound~\eqref{eq:chainRenyi}, our task is reduced to upper-bounding $\sup_\omega H_\alpha(\lkE_j)_{\lchann_j[\omega]}$ (for all $j$).
At first glance, it might seem that this is possible using only the bounded-weight leakage constraint (or the classical-probabilistic leakage constraint), since that enforces that the state produced by the leakage channel $\lchann_j$ is ``close'' to the pure state $\pure{\gnd}$, which has zero entropy. However, this alone does not quite work, since if the $\lkE_j$ systems can have arbitrarily high dimension, they can still have arbitrarily high $H_\alpha(\lkE_j)$ despite the bounded-weight or classical-probabilistic leakage constraint (as can be seen from the subsequent sections, and also basically implied by the ``random full-leakage attack'' previously described in Sec.~\ref{sec:leakmodel}). Hence in this section we shall also require some choice of additional constraint as mentioned in Sec.~\ref{sec:leakmodel}, i.e.~either a dimension bound or an energy bound.

Before proceeding, we show a helpful reduction to classical probability distributions (rather than quantum states): for each $\lkE_j$ register, let 
$\big\{\ket{e_k}_{\lkE_j}\big\}$ 
be any orthonormal basis such that the first basis state $\ket{e_0}$ is equal to $\ket{\gnd}$, and let $\mathcal{P}$ denote the pinching channel with respect to that basis, i.e.~$\mathcal{P}[\rho] \defvar \sum_k \pure{e_k} \rho \pure{e_k}$. Since pinching channels are unital, they cannot decrease the {\Renyi} entropy~\cite{Tom16}, i.e.~for any state $\rho$ we have 
\begin{align}
H_\alpha(\lkE_j)_{\rho} \leq H_\alpha(\lkE_j)_{\mathcal{P}[\rho]} = \frac{1}{1-\alpha} \log \sum_k w_k^\alpha,
\label{eq:pinchbnd}
\end{align}
where $w_k \defvar \bra{e_k}\rho\ket{e_k}$ (these values $\mathbf{w}$ form a probability distribution, i.e.~we have $w_k \geq 0$ and $\sum_k w_k = 1$).
Since the right-hand-side of the above bound is monotone increasing with respect to the $\sum_k w_k^\alpha$ term, to upper bound $H_\alpha(\lkE_j)$ it suffices to just consider the latter instead.
Furthermore, note that under either the bounded-weight or classical-probabilistic leakage constraint (both these models give the same results in our subsequent analysis, in contrast to Sec.~\ref{sec:1rnd}), we have $w_0 = \bra{\gnd}\rho\ket{\gnd} \geq 1-\dleak$. 
With this, our task is reduced to studying the following optimization: 
\begin{align}
\label{eq:Renyiopt}
\begin{gathered}
\sup_{\mathbf{w} \in \dom} 
\sum_k w_k^\alpha\\
\begin{aligned}
\suchthat \quad & w_0 \geq 1-\dleak,
\end{aligned}
\end{gathered}
\end{align}
where the optimization domain $\dom$ encodes the requirement that $\mathbf{w}$ is a probability distribution, as well as either a dimension bound or an energy bound, depending on which we choose to use (the above optimization is unbounded if $\dom$ is allowed to be e.g.~all probability distributions $\mathbf{w}$ of arbitrary finite dimension). 
Given some upper bound $U_\alpha$ on the above optimization, the quantity of interest $\sup_\omega H_\alpha(\lkE_j)_{\lchann_j[\omega]}$ is simply upper bounded by $\frac{1}{1-\alpha} \log U_\alpha$.

This is now just an optimization of {\Renyi} entropy (with the logarithm omitted) over classical probability distributions; furthermore, the objective function is a concave function of $\mathbf{w}$ 
(recalling that we are using $\alpha<1$) and hence this is a concave optimization as long as $\dom$ is a convex set.
We remark that this approach yields a tight bound on $\sup_\omega H_\alpha(\lkE_j)_{\lchann_j[\omega]}$ whenever for instance the leakage channels satisfy the following property: there exists a state $\omega$ attaining the supremum in $\sup_\omega H_\alpha(\lkE_j)_{\lchann_j[\omega]}$, such that 
$\mathcal{P}\circ\lchann_j[\omega]$ is also a possible output state of $\lchann_j$ (i.e.~basically that the leakage channels can also produce a classical version of an output state attaining the supremum). 
We now discuss the details of how to bound the above optimization when choosing $\dom$ to encode either a dimension bound or an energy bound.

\subsection{Dimension bounds}
\label{sec:dimbnd}

\newcommand{\kmax}{{k_\mathrm{max}}}
Here, we shall suppose that we are given some constant $\dL \in \mathbb{N}$ such that every $L_j$ register has dimension at most $\dL$, i.e.~so the domain $\dom$ in~\eqref{eq:Renyiopt} is the set of $\dL$-dimensional probability distributions. We first remark that this constraint could be motivated, for instance, if we suppose that the leakage register $\lkE_j$ in each round is produced by some channel acting on a classical memory register $C_j$ with dimension upper bounded by some constant $d_C\in\mathbb{N}$, i.e.~a form of bounded-memory constraint.
In that case, one can show that without loss of generality we can 
set $d_L = d_C + 1$: intuitively, this is because we only need that many dimensions to ``encode the information'' in $C_j$ while preserving the leakage constraints; we formalize this model and present the rigorous details in Appendix~\ref{app:membnd} (there are some subtleties, e.g.~we need to start by analyzing $\Hmin^{\esp}\left(\str{S}|\str{\lk}^{A\to E} \str{\lk}^{B\to E} \En \str{P} \right)_{\rhoPE}$ directly rather than the optimization~\eqref{eq:Renyiopt}).
As an example of a possible choice for $d_C$, focusing on the case where $\str{\lkE} = \str{\lk}^{A\to E}$, we could for instance suppose that $C_j$ only stores the past $\kmax$ outputs from Alice's device for some fixed $\kmax$, in which case we have $d_C = d_A^\kmax$ where $d_A$ is the dimension of Alice's single-round output (or if we want to allow $C_j$ to store the past $\kmax$ outputs from both devices, just take $d_C = (d_A d_B)^\kmax$ instead).

We also highlight that such a dimension constraint is rather different from the one analyzed in the leakage model of~\cite{arx_JK21} --- in that work, to obtain nontrivial results, the \emph{total} dimension of all the leakage registers over the protocol must be small (more precisely, while the log-dimension can be of order $\Omega(n)$, it must be strictly less than the amount of smooth min-entropy the devices would have generated without leakage). In our model, however, we can allow the total log-dimension of $\str{\lkE}$ to be much larger than $\Hmin^{\es}(\str{S}|\En \str{P})_{\rhoPE}$, e.g.~we can set $\dL$ to be larger than the dimension of an $S_j$ register (as would be the case if we choose $d_L = d_A^\kmax + 1$ following the above bounded-memory discussion), and still obtain nontrivial results.\footnote{Still, qualitatively speaking it seems that a potential alternative approach in our setting might have been to use our leakage constraints to argue that in a classical sense, with high probability ``not too many'' of the $\lkE_j$ registers are in a nontrivial state, and use this to bound the dimension of the support of the state on $\str{\lkE}$ (given a dimension bound $\dL$), then apply the analysis in~\cite{arx_JK21}. However, it currently does not seem straightforward to formalize this into a rigorous argument when accounting for non-IID behaviour. 
Although, we highlight that in any case $\Hmax^{\esL}\left(\str{\lkE}\right)$ approximately characterizes the size of the support of the state (see e.g.~\cite{TL17}), so the approach we present here could already be viewed in some sense as one approach to formalize this intuition.} 

We now turn to the main goal of upper-bounding~\eqref{eq:Renyiopt} given some dimension bound $\dL$. 
This is in fact straightforward: intuitively, the maximum entropy should be achieved by setting $w_0$ to be as low as possible and then distributing the remaining probability uniformly over the other $\dL-1$ variables $w_k$ (assuming that $\dleak < 1-1/\dL$; otherwise we can just set $\mathbf{w}$ to be the uniform distribution and attain the trivial maximum value $H_\alpha(\lkE_j) = \log\dL$ for the {\Renyi} entropy), i.e.~set $w_0 = 1-\dleak$ and $w_k = \dleak/(\dL-1)$ otherwise.
This can be quickly confirmed by a symmetry argument: since the optimization is invariant under permutations of the variables $\{w_k | k\neq 0\}$, and the objective function is concave, the maximum value can always be attained by some solution in which all $\{w_k | k\neq 0\}$ have the same value, from which the claim easily follows. (Alternatively, one could use a Lagrange-multiplier argument; see Appendix~\ref{app:lagrange}.) Hence the optimization~\eqref{eq:Renyiopt} evaluates to
\begin{align}
(1-\dleak)^\alpha + (\dL-1)\left(\frac{\dleak}{\dL-1}\right)^\alpha
,
\end{align}
as long as $\dleak < 1-1/\dL$ (otherwise the optimal value becomes just the trivial maximum value and we cannot obtain any useful results; though in any case, for any scenario where we can expect to obtain nontrivial results we almost certainly have $\dleak \leq 1/2$ and thus $\dleak \leq 1-1/\dL$).

\begin{figure}
\centering
\includegraphics[width=0.6\textwidth]{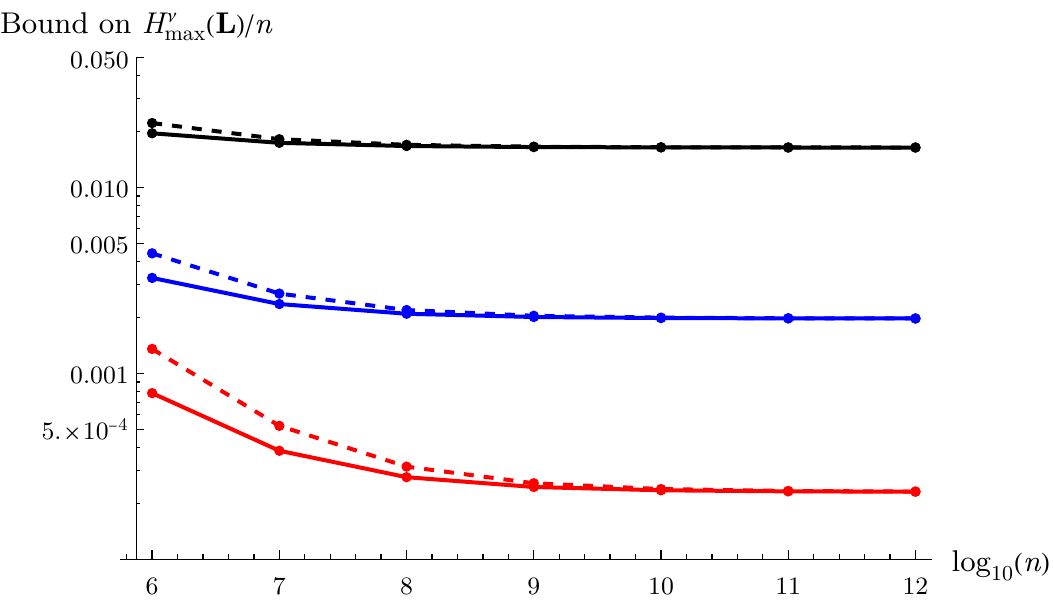}
\caption{
Upper bounds (as a function of the number of protocol rounds $n$) on $\frac{1}{n}\Hmax^{\esL}\left(\str{\lkE}\right)_{\rhoPE}$ via~\eqref{eq:Hmaxdimbnd}, numerically optimized over $\alpha\in[1/2,1)$.
The black, blue and red colours respectively denote $\dleak = 10^{-3},10^{-4},10^{-5}$, while the solid and dashed lines respectively denote the choices $\esL=\epsPE=10^{-3}$ and $\esL=\epsPE=10^{-10}$. 
We choose the dimension bound to be $\dL=2^5+1$, which can be motivated for instance by supposing that each leakage register $\lkE_j$ is produced from a classical memory register of no more than $5$ bits (see discussion in main text).
}
\label{fig:bnddim}
\end{figure}

Putting this together with~\eqref{eq:chainRenyi}--\eqref{eq:pinchbnd}, we obtain an explicit upper bound on $\Hmax^{\esL}\left(\str{\lkE}\right)_{\rhoPE}$, for $\dleak < 1-1/\dL$ and any $\alpha\in[1/2,1)$:
\begin{align}
\Hmax^{\esL}\left(\str{\lkE}\right)_{\rhoPE} \leq \frac{1}{1-\alpha} \log \left((1-\dleak)^\alpha + (\dL-1)\left(\frac{\dleak}{\dL-1}\right)^\alpha\right)n +
\frac{\log(1/\epsPE)+\smf{\esL}}{1/\alpha - 1} .
\label{eq:Hmaxdimbnd}
\end{align}
To provide a simple example calculation of the above bound, we plot it as a function of $n$ for various choices of $\dleak$ and the parameters $\esL,\epsPE$ in Fig.~\ref{fig:bnddim}, with numerical optimization of the {\Renyi} parameter $\alpha\in[1/2,1)$. We choose $\dL = 2^5+1$, which in terms of our above discussion regarding bounded memory, we can view as supposing each leakage register $\lkE_j$ is produced from a classical memory register of no more than $5$ bits (alternatively we can just suppose the leakage registers inherently have maximum dimension $\dL$ for some physical reason).
We see that it is possible to obtain nontrivial bounds on $\frac{1}{n}\Hmax^{\esL}\left(\str{\lkE}\right)_{\rhoPE}$ for these parameter choices, and also that the value becomes close to the asymptotic value (which should be $\sup_\omega H(\lkE_j)_{\lchann_j[\omega]}$) at approximately $n\sim10^9$. 

\subsection{Energy bounds}
\label{sec:Ebnd}

If we do not wish to impose a ``hard'' dimension bound on the leakage registers (or the registers they are generated from), we can instead follow a common approach for avoiding such dimension bounds, namely to impose an upper bound $\avgE$ on the expectation value of the energy with respect to some Hamiltonian --- this informally ensures that the state cannot have too much ``weight'' on high energy levels.\footnote{In the analysis for this section we will allow the leakage registers to have infinite dimension. Technically, this is not precisely consistent with our generic assumption in this work that all systems have finite (though possibly unknown) dimension, but we can just view this as a minor relaxation of that condition within this section.} At first glance, it would seem that with this we could argue that the state is close to one with support on some finite-dimensional low-energy subspace, then apply the dimension-bound analysis in the previous section. Unfortunately, this argument does not quite work out because without further structure in the Hamiltonian, for any fixed $\delta>0$ it is possible for a state on an infinite-dimensional space to be $\delta$-close (in e.g.~trace distance) to one with finite-dimensional support and yet have arbitrarily high entropy, i.e.~we do not have uniform continuity of entropy for infinite-dimensional systems.\footnote{This can be seen from e.g.~the discussion in~\cite{Win16}, but for a concrete example in our context, take any Hamiltonian with infinitely many energy levels below some finite value $E_{\star}$. Note that there always exists some sufficiently small $t>0$ such that setting $w_0 = 1-t$ and distributing the remaining weight in any fashion across the energy levels below $E_{\star}$ will still satisfy the energy constraint. 
Given there are infinitely many such energy levels, we can distribute this weight uniformly across arbitrarily many of them, which yields arbitrarily high entropy (and in fact this allows the ``random full-leakage attack'' described in Sec.~\ref{sec:leakmodel}).} Hence in this section we analyze how to tackle the optimization~\eqref{eq:Renyiopt} directly under an energy constraint, rather than attempting to relate it to some intermediate state with finite-dimensional support.

We briefly remark on some possible alternatives: another approach could have been to use the continuity bound in~\cite{Win16} based on an energy bound (and some properties of the Hamiltonian) to formalize the above argument involving an intermediate state with finite-dimensional support. However, the continuity bound in~\cite{Win16} is for the von Neumann entropy, and hence in our context we would need to either generalize the proof to {\Renyi} entropy, or bound the difference between the von Neumann entropy and {\Renyi} entropy under an energy bound (again, if the dimension is bounded, this follows from e.g.~the dimension-dependent continuity bound in~\cite{DFR20}, but with only an energy bound it is less clear how to proceed). In any case, this approach seems more indirect than what we use below, and hence is likely to yield worse bounds.
Another potential approach could be to try using the energy bound to argue that the \emph{entire} state produced by the protocol has e.g.~trace distance $\tilde{\eps}$ with respect to another state where all the $\lkE_j$ registers have bounded dimension. In that case, if we use the bounded-dimension analysis to prove that the latter produces an $\esecr$-secret key (see e.g.~\cite{arx_PR21,arx_tanthesis} for a detailed definition and discussion of $\esecr$-secrecy), we can conclude that the original state produces an $(\esecr+\tilde{\eps})$-secret key by the triangle inequality. This approach bypasses the continuity-bound obstacles described above, because the $\tilde{\eps}$-closeness is used directly to bound the trace distance to some final ideal state produced at the very end of the protocol, rather than to invoke an continuity argument in the intermediate entropic analysis. However, as we discuss in Appendix~\ref{app:dimfromE}, the scaling of the dimension bounds we could obtain from such an argument seems unlikely to be useful.

We now present our approach for handling the optimization~\eqref{eq:Renyiopt} with an energy bound.
We require the Hamiltonian $H$ to have the property that one of the
ground states is 
equal to the state $\ket{\gnd}$ we defined in our leakage model, and that the system has countable (possibly infinite) dimension.
With this, we can take the eigenbasis of the Hamiltonian as the orthonormal basis $\big\{\ket{e_k}_{\lkE_j}\big\}$ used to construct the optimization~\eqref{eq:Renyiopt}, ordering the eigenvectors such that $\ket{e_0} = \ket{\gnd}$. 
We denote the energy eigenvalues as $E_k$, and for ease of presentation we set the ground state energy $E_0$ to be zero without loss of generality.
The energy bound we shall consider is to say that for any state $\rho$ that could be produced by the leakage channel $\lchann_j$, the expectation value of the energy with respect to $H$ (i.e.~$\tr{\rho H}$) is upper bounded by some constant $\avgE\geq0$. Recalling how the $w_k$ variables in~\eqref{eq:Renyiopt} were defined in terms of the basis $\big\{\ket{e_k}_{\lkE_j}\big\}$, we have $\tr{\rho H} = \sum_k w_k E_k$. 
With this, the optimization~\eqref{eq:Renyiopt} can be written as follows (for use in our later analysis, we now write out the normalization condition as an explicit constraint):
\begin{align}
\label{eq:Ebndopt}
\begin{gathered}
\sup_{
w_k \in \mathbb{R}_{\geq 0}
} 
\sum_k w_k^\alpha\\
\begin{aligned}
\suchthat \quad & w_0 \geq 1-\dleak, \quad \sum_k w_k E_k \leq \avgE, 
\quad \sum_k w_k = 1,
\end{aligned}
\end{gathered}
\end{align}
for a countable (possibly infinite) number of variables $w_k$. (Note that even with the conditions listed above on the Hamiltonian, it is still possible for the optimization~\eqref{eq:Ebndopt} to be unbounded; e.g.~as a trivial example, if it has an infinite number of energy levels below $\avgE$. However, we shall not attempt to impose further conditions here, instead leaving it up to individual applications whether the Hamiltonian of interest yields a finite value in the optimization.)

The optimization~\eqref{eq:Ebndopt} is essentially just an entropy maximization problem subject to an energy constraint (and a ground-state constraint) --- this is similar to standard questions in thermodynamics, except that here we are considering a {\Renyi} entropy rather than Shannon/Boltzmann entropy. Hence we can apply analogous approaches to tackle the optimization; specifically, here we use a Lagrange dual analysis. We present the details in Appendix~\ref{app:lagrange}, with the main result being the following upper bound on~\eqref{eq:Ebndopt}. (In particular, we remark that the parameter $\lagE$ in this bound is a Lagrange dual variable for the energy constraint, somewhat analogous to the thermodynamic inverse-temperature parameter $\beta=1/(k_B T)$; hence our choice of notation.) In the following lemma, the optimization~\eqref{eq:Ebndopt} and the function $\dualf(\lagG, \lagE, \lagP)$ should be understood to have values in the extended reals $\mathbb{R} \cup \{\pm \infty\}$, following standard conventions in optimization theory (i.e.~for instance they take value $+\infty$ if~\eqref{eq:Ebndopt} is unbounded above or if the summation in~\eqref{eq:dualformula} diverges). 
\begin{lemma}\label{lem:Ebnd}
For any values $\lagG, \lagE, \lagP \in \mathbb{R}$ such that $\lagE \geq 0$ and $\lagP > \lagG \geq 0$, the optimization~\eqref{eq:Ebndopt} is upper bounded by
\begin{align}\label{eq:dualformula}
\dualf(\lagG, \lagE, \lagP) \defvar
(1-\alpha) \left(\left(\frac{\alpha}{\lagP - \lagG}\right)^{\frac{\alpha}{1-\alpha}}
+ \sum_{k\neq 0} \left(\frac{\alpha}{\lagE E_k + \lagP}\right)^{\frac{\alpha}{1-\alpha}}\right) 
- \lagG (1-\dleak) + \lagE \avgE + \lagP 
.
\end{align} 
Furthermore, $\dualf$ is a convex function of $(\lagG, \lagE, \lagP)$, and as long as $\dleak,\avgE>0$, it yields a tight bound in the sense that the optimal value of~\eqref{eq:Ebndopt} is equal to
\begin{align}\label{eq:dualopt}
\inf_{\lagE \geq 0, \; \lagP > \lagG \geq 0}
\dualf(\lagG, \lagE, \lagP).
\end{align} 
\end{lemma}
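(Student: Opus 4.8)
\textbf{Proof proposal for Lemma~\ref{lem:Ebnd}.} The plan is to treat~\eqref{eq:Ebndopt} as a concave maximization — the objective $\sum_k w_k^\alpha$ is concave since $\alpha\in(0,1)$, and all three constraints are affine in $\mathbf{w}$ — and to apply Lagrangian duality. First I would form the Lagrangian, assigning a multiplier $\lagG\ge0$ to the ground-state-weight constraint $(1-\dleak)-w_0\le0$, a multiplier $\lagE\ge0$ to the energy constraint $\sum_k w_k E_k-\avgE\le0$, and a multiplier $\lagP\in\mathbb{R}$ to the normalization equality $\sum_k w_k=1$. The key point is that the resulting Lagrangian decouples across the variables $w_k$: collecting terms (and using $E_0=0$), one is left with $\sum_k\big(w_k^\alpha - c_k w_k\big) - \lagG(1-\dleak) + \lagE\avgE + \lagP$, where $c_0 = \lagP-\lagG$ and $c_k = \lagE E_k + \lagP$ for $k\neq0$. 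Each single-variable problem $\sup_{w\ge0}(w^\alpha - c w)$ is elementary: it equals $(1-\alpha)(\alpha/c)^{\alpha/(1-\alpha)}$ when $c>0$ (attained at $w=(\alpha/c)^{1/(1-\alpha)}$) and $+\infty$ when $c\le0$. This explains the hypotheses of the lemma: the requirement $\lagP>\lagG\ge0$ is precisely $c_0>0$, and then $\lagE\ge0$ together with $E_k\ge0$ gives $c_k\ge\lagP>0$ for all $k\neq0$. Summing the per-variable optima and reinstating the constant terms reproduces exactly $\dualf(\lagG,\lagE,\lagP)$ in~\eqref{eq:dualformula}, and weak duality then gives that~\eqref{eq:Ebndopt} is upper bounded by $\dualf(\lagG,\lagE,\lagP)$ for every admissible triple.

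For convexity of $\dualf$ there are two routes. The slick one: $\dualf$ is a supremum over $\mathbf{w}\ge0$ of functions that are affine in $(\lagG,\lagE,\lagP)$, hence convex. Alternatively, one checks it directly: up to the affine term $-\lagG(1-\dleak)+\lagE\avgE+\lagP$, $\dualf$ is a sum of terms of the form $(\alpha/t)^{\alpha/(1-\alpha)}$, i.e.\ $t^{-p}$ with $p=\alpha/(1-\alpha)>0$, which is convex on $t>0$; precomposing with the affine maps $t=\lagP-\lagG$ and $t=\lagE E_k+\lagP$ preserves convexity, and a (possibly infinite) sum of convex functions valued in $\mathbb{R}\cup\{+\infty\}$ is convex.

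The main obstacle is the tightness claim~\eqref{eq:dualopt}, i.e.\ strong duality. When $\dleak,\avgE>0$ the primal has a strictly feasible point ($w_0=1$, all other $w_k=0$, giving $w_0>1-\dleak$ and $\sum_k w_k E_k = 0 < \avgE$), so a Slater-type condition holds formally; but in the infinite-dimensional setting this does not by itself yield strong duality (the nonnegativity cone has empty interior), so I would prove it constructively instead. Define a candidate primal point by the stationarity relations $w_0^\star=(\alpha/(\lagP^\star-\lagG^\star))^{1/(1-\alpha)}$ and $w_k^\star=(\alpha/(\lagE^\star E_k+\lagP^\star))^{1/(1-\alpha)}$, and choose the three multipliers so that the constraints hold with equality (equivalently, so that complementary slackness holds): $w_0^\star=1-\dleak$, $\sum_k w_k^\star=1$, $\sum_k w_k^\star E_k=\avgE$. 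A short computation substituting these into~\eqref{eq:dualformula} shows that the two bracketed "boundary'' terms $\lagG^\star(w_0^\star-(1-\dleak))$ and $\lagE^\star(\avgE-\sum_k w_k^\star E_k)$ both vanish, so $\dualf(\lagG^\star,\lagE^\star,\lagP^\star)=\sum_k(w_k^\star)^\alpha$ and the duality gap is zero. The remaining — and genuinely technical — step is to show that a valid triple solving those three scalar equations exists inside the region $\lagE^\star\ge0$, $\lagP^\star>\lagG^\star\ge0$ whenever~\eqref{eq:Ebndopt} is finite: here I would first fix $w_0^\star=1-\dleak$ to pin $\lagP^\star-\lagG^\star$, then argue that the maps $(\lagE,\lagP)\mapsto\sum_{k\neq0}(\alpha/(\lagE E_k+\lagP))^{1/(1-\alpha)}$ and $(\lagE,\lagP)\mapsto\sum_{k\neq0}E_k(\alpha/(\lagE E_k+\lagP))^{1/(1-\alpha)}$ are continuous and strictly decreasing in each argument with the right limiting behaviour, and conclude existence of a solution by an intermediate-value/continuity argument, treating separately the degenerate cases where the energy or ground-state constraint turns out to be slack (the corresponding multiplier then being $0$). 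I expect essentially all the care in the appendix to go into this existence-of-optimal-multipliers argument and into checking it stays in the admissible parameter region; the upper bound and convexity are mechanical by comparison.
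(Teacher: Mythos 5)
Your weak-duality bound and the convexity claim are fine and essentially coincide with the paper's own argument: the paper also forms the Lagrangian with the same sign conventions and the same admissible region $\lagE\geq 0$, $\lagP>\lagG\geq 0$, and evaluates $\sup_{\mathbf{w}\geq 0}L$ via the stationary point of the (jointly concave) Lagrangian, which is exactly your per-variable computation $\sup_{w\geq0}(w^\alpha-cw)=(1-\alpha)(\alpha/c)^{\alpha/(1-\alpha)}$ with $c_0=\lagP-\lagG$ and $c_k=\lagE E_k+\lagP$; convexity of $\dualf$ is likewise argued there as a supremum of affine functions of the dual variables. Up to this point your proposal is correct and the same proof.

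The gap is in the tightness claim. Your plan is to exhibit an exact KKT pair: multipliers $(\lagG^\star,\lagE^\star,\lagP^\star)$ in the admissible region together with the stationary $\mathbf{w}^\star$ satisfying the constraints with equality (or with the corresponding multiplier set to zero when a constraint is slack), so that complementary slackness closes the duality gap. But the existence of such a solution is precisely the entire content of the claim, and you only sketch it: the intermediate-value argument has to establish continuity, strict monotonicity and the correct limiting behaviour of the infinite series $\sum_{k\neq0}(\alpha/(\lagE E_k+\lagP))^{1/(1-\alpha)}$ and $\sum_{k\neq0}E_k(\alpha/(\lagE E_k+\lagP))^{1/(1-\alpha)}$ for a general countable spectrum, check that the required target values lie in the attainable range, verify that the resulting $\lagG^\star=\lagP^\star-(\lagP^\star-\lagG^\star)$ is nonnegative (otherwise switch to the slack-ground-state branch), and handle boundary cases such as $\lagE^\star=0$, where with infinitely many levels the candidate distribution is not even normalizable. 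Moreover, a complementary-slackness construction implicitly asserts primal (and dual) attainment, which is not obvious here and is not needed for the statement being proved; if either optimum is only approached in a limit, your argument gives nothing. The paper avoids all of this with a different, finite-dimensional device: it defines the perturbation (value) function $F(w_{\mathrm{min}}',\avgE',p)$ of the problem with perturbed right-hand sides, notes that the relevant geometry lives in $\mathbb{R}^3$ (so your correct worry about Slater's condition failing in the sequence space, where the nonnegativity cone has empty interior, is moot), and shows that $(1-\dleak,\avgE,1)$ is an interior point of $\mathrm{dom}(F)$ simply because for all nearby perturbations the point $w_0=p'$, $w_k=0$ for $k\neq0$ is feasible; the standard supporting-hyperplane step from the proof of Slater's condition then yields zero duality gap with no attainment assumptions and no equations to solve. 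I would either adopt that route or substantially expand your existence argument, including the degenerate branches and the non-attainment scenarios, before the tightness part of the lemma can be considered proved.
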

\noindent
This gives us a variational method to bound the optimization~\eqref{eq:Ebndopt}, by simply optimizing over the choice of $\lagG, \lagE, \lagP$; furthermore, since the optimization~\eqref{eq:dualopt} is convex, it should be well-behaved under heuristic numerical methods.
We remark that for the scenarios we studied in this work, we found that the optimal value of $\lagE$ could be very small, and hence numerical stability was improved by reparametrizing it as $\lagE = 10^{-z}$ and optimizing over $z$ instead. (With this reparametrization the optimization~\eqref{eq:dualopt} may become nonconvex; however, since $10^{-z}$ is a strictly monotone function this reparametrization preserves the property that any local minimum is a global minimum, so heuristic numerical methods should still perform well.)

\newcommand{\lvl}{l}
\newcommand{\pow}{s}
As a demonstration of our method, we now analyze the case of a harmonic oscillator Hamiltonian with $M$ independent modes\footnote{Here we suppose that $M$ is a finite fixed value, since again, if there are arbitrarily many distinguishable modes then in principle they could be used to encode enough information to allow the ``random full-leakage attack'', even under the $\dleak$ constraint. Ruling out that attack under such conditions would require more constraints on the Hamiltonian and/or states (for instance, that only a finite number $M$ of the modes are ``nontrivial'', i.e.~only those modes can have correlations with the secret data, in which case the analysis in this section can indeed be applied).} and a common ground state $\ket{\gnd}$ for all the modes, i.e.~if for each mode $m$ we use $\Delta_m$ to denote the energy level spacing and $\ket{e_{m,\lvl}}$ to denote the eigenstate for the $\lvl^\text{th}$ energy level, it has the form
\begin{align}
H &= E_0\pure{\gnd} + \sum_{m=1}^M \sum_{\lvl=1}^\infty \lvl \Delta_m \pure{e_{m,\lvl}} 
\nonumber\\
&= \sum_{m=1}^M \sum_{\lvl=1}^\infty \lvl \Delta_m \pure{e_{m,\lvl}} \quad \text{since we set } E_0=0.
\end{align}
(In terms of the notation in~\eqref{eq:Ebndopt}, we are taking the index $k$ to have values in $\{0\} \cup \left(\{1,2,\dots,M\} \times \mathbb{N}\right)$, where $k=0$ labels the ground state $\pure{\gnd}$ and otherwise $k=(m,\lvl)$ labels the $\lvl^\text{th}$ energy level of mode $m$.)

For this Hamiltonian, for each mode $m$ the infinite sum in~\eqref{eq:dualformula} takes the form
$\sum_{\lvl=1}^\infty \left(\frac{\alpha}{\lagE \lvl \Delta_m + \lagP}\right)^\pow$ where the exponent $\pow \defvar \frac{\alpha}{1-\alpha}$ is larger than $1$ (and $\alpha,\lagE,\Delta_m,\lagP\geq0$), hence the sum indeed converges as long as $\lagE>0$. 
If we consider the Hurwitz zeta function $\zeta(s,x)\defvar\sum_{\lvl=0}^\infty (\lvl+x)^{-s}$, which some computational software programs can evaluate using inbuilt methods, we could write this sum as
$\left(\frac{\alpha}{\lagE \Delta_m}\right)^\pow 
\left(\zeta\!\left(\pow,\frac{\lagP}{\lagE \Delta_m}\right)
- \left(\frac{\lagE \Delta_m}{\lagP}\right)^\pow \right)
$. However, for more flexibility (and numerical stability, as we find that the Hurwitz zeta computation can be unstable in some parameter regimes) we can instead use a simple upper bound in terms of the integral version of the sum 
(since the terms in the sum can be written as a decreasing function of $\lvl\in\mathbb{R}_{\geq 0}$): 
\begin{align}
\sum_{\lvl=1}^\infty \left(\frac{\alpha}{\lagE \lvl \Delta_m + \lagP}\right)^\pow
&\leq \left(\frac{\alpha}{\lagE \Delta_m + \lagP}\right)^\pow + \int_{\lvl=1}^\infty \left(\frac{\alpha}{\lagE \lvl \Delta_m + \lagP}\right)^\pow \mathrm{d}\lvl \nonumber\\
&= \left(\frac{\alpha}{\lagE \Delta_m + \lagP}\right)^\pow
+\left[
\frac{1}{-\pow+1} \frac{\alpha}{\lagE \Delta_m} \left(\frac{\lagE \lvl \Delta_m + \lagP}{\alpha}\right)^{-\pow+1} 
\right]_{\lvl=1}^\infty 
\nonumber\\
&= \left(\frac{\alpha}{\lagE \Delta_m + \lagP}\right)^{\frac{\alpha}{1-\alpha}} + 
\left(\frac{1-\alpha}{2\alpha-1}\right) \frac{\alpha}{\lagE \Delta_m} \left(\frac{\alpha}{\lagE \Delta_m + \lagP}\right)^{\frac{2\alpha-1}{1-\alpha}}. \label{eq:intbnd}
\end{align}
Note that conversely, the sum is lower-bounded by the above expression with the $\left(\frac{\alpha}{\lagE \Delta_m + \lagP}\right)^{\frac{\alpha}{1-\alpha}}$ term omitted, hence giving an estimate of the tightness of this bound. We find that for the examples considered below, this term is indeed very small, indicating the above bound is quite tight.

With this, we compute upper bounds on $\Hmax^{\esL}\left(\str{\lkE}\right)_{\rhoPE}$ for some parameter choices.
We first remark that the formula~\eqref{eq:intbnd} unfortunately appears to be numerically unstable if $\alpha$ is close to $1$, and hence we were unable to optimize over all $\alpha\in[1/2,1)$ as we did in the previous section. Instead we simply chose a few fixed values of $\alpha$, and computed the corresponding bounds on $\sup_\omega H_\alpha(\lkE_j)_{\lchann_j[\omega]}$; 
recalling our discussion below the bound~\eqref{eq:chainRenyi}, this means the resulting bound on $\frac{1}{n}\Hmax^{\esL}\left(\str{\lkE}\right)_{\rhoPE}$ asymptotically approaches the $\alpha$-{\Renyi} entropy for the chosen value of $\alpha$, rather than the von Neumann entropy (though as previously observed, this is not unexpected since in principle there could be Hamiltonians such that the difference between the $\alpha$-{\Renyi} entropy and von Neumann entropy in this optimization is unbounded). Still, we find that for some $\alpha$ choices it is possible to obtain reasonable results (also, we compute the corresponding values for the $(\log(1/\epsPE)+\smf{\esL})/({1/\alpha - 1})$ term to verify that it is not unreasonably large).

Specifically, we take the example of a Hamiltonian with two modes, with energy level spacings $\Delta_1 = u$ and $\Delta_2 = 2u$ for some arbitrary energy unit $u$ (as one would expect, the choice of unit does not affect the final bound, as it corresponds to just rescaling the parameter $\lagE$ in~\eqref{eq:dualformula}).
We compute results for $\alpha \in \{0.9,0.99,0.999\}$, for which the corresponding values of $1/({1/\alpha - 1})$ are $9$, $99$, and $999$ respectively (basically, when $\alpha=1-x$ for some $x\in(0,1)$, the value is $
1/x - 1
$). These values (even after multiplying by the numerator $\log(1/\epsPE)+\smf{\esL}$ of that term in~\eqref{eq:chainRenyi}) should not be too large compared to the $\Omega(n)$ smooth min-entropy term, for instance in photonic implementations which may have $n\geq10^{11}$~\cite{LLR+21}. 
With this model, we obtain the following upper bounds on $\sup_\omega H_\alpha(\lkE_j)_{\lchann_j[\omega]}$ for $\avgE = 10^{5}u$:

\def\arraystretch{1.5} 
\setlength\tabcolsep{2mm}
\centerline{
\begin{tabular}{c | c c c}
{} & $\alpha=0.9$ & $\alpha=0.99$ & $\alpha=0.999$ \\
\hline
$\dleak = 10^{-2}$ & $1.14987$ & $0.37129$ & $0.33713$ \\
$\dleak = 10^{-3}$ & $0.19596$ & $0.04566$ & $0.04053$ \\
$\dleak = 10^{-4}$ & $0.03199$ & $0.00545$ & $0.00476$
\end{tabular}
}
\def\arraystretch{1} 
\vspace{5mm}
\noindent and for $\avgE = 10^{12}u$:

\def\arraystretch{1.5} 
\setlength\tabcolsep{2mm}
\centerline{
\begin{tabular}{c | c c c}
{} & $\alpha=0.9$ & $\alpha=0.99$ & $\alpha=0.999$ \\
\hline
$\dleak = 10^{-2}$ & $5.37979$ & $0.68498$ & $0.57672$ \\
$\dleak = 10^{-3}$ & $1.00513$ & $0.07861$ & $0.06460$ \\
$\dleak = 10^{-4}$ & $0.16480$ & $0.00890$ & $0.00715$
\end{tabular}
}
\def\arraystretch{1} 
\vspace{5mm}
We see that despite the large $\avgE$ values, we can still obtain nontrivial bounds on $\sup_\omega H_\alpha(\lkE_j)_{\lchann_j[\omega]}$ for smaller values of $\dleak$ and/or values of $\alpha$ closer to $1$. We leave for future work the topic of choosing more specialized Hamiltonians tailored for specific implementations, and finding if they yield nontrivial bounds.

\section{Conclusion and further work}
\label{sec:conclusion}

In this work, we have provided techniques to compute lower bounds on the achievable key lengths in DI protocols with constrained leakage, covering both the analysis of single rounds and the required steps to obtain a finite-size security proof without an IID assumption. While we have not considered specific implementations in detail, the techniques we provide are intended to be flexible and easily built into the existing proof techniques, with the exact parameter choices being fine-tuned for individual implementations. Our results suggest that the existing DI protocol implementations should be robust against a small amount of leakage from the devices, although with our current proof techniques, we may require the leakage parameter $\dleak$ to have rather small values to obtain nontrivial results (especially for the bounded-weight leakage model). However, we highlight that for the bounded-weight model, there is still room for potential sharpening of the bounds; specifically, the continuity bound~\eqref{eq:fcont} could be significantly sharpened if one were to find a continuity bound based directly on fidelity instead of trace distance.

As a possible extension or variant of the approach presented here, we note that recently, a privacy amplification theorem was developed in~\cite{arx_Dup21} based on {\Renyi} entropy rather than smooth min-entropy. If desired, it seems possible to implement our approach in a proof based on that theorem as well. Specifically, there exists a powerful chain rule for (appropriately defined) conditional {\Renyi} entropies~\cite{Dup15}: for any $\alpha,\alpha',\alpha'' \in (1/2,1) \cup (1,\infty)$ such that $\frac{\alpha}{\alpha-1} = \frac{\alpha'}{\alpha'-1} + \frac{\alpha''}{\alpha''-1}$, one has $H_\alpha(QQ'|Q'') \geq H_{\alpha'}(Q|Q'Q'') + H_{\alpha''}(Q'|Q'')$ if $(\alpha-1)(\alpha'-1)(\alpha''-1)>0$, and the inequality is reversed if $(\alpha-1)(\alpha'-1)(\alpha''-1)<0$. By using this chain rule, we should be able to obtain an analogue of the bounds~\eqref{eq:chainHminHmax}--\eqref{eq:chainleak} here, hence lower-bounding $H_\alpha \left(\str{S}|\str{\lk}^{A\to E} \str{\lk}^{B\to E} \En \str{P} \right) $ in terms of $H_{\overline{\alpha}} (\str{S}|\En \str{P} )$ and $H_{\hat{\alpha}} \left(\str{\lk}^{A\to E}\right), H_{\hat{\alpha}}\left(\str{\lk}^{B\to E}\right)$ for some $\overline{\alpha},\hat{\alpha}$. 
Now, while in this work we have described the EAT as bounding $\Hmin^{\es}\left(\str{S}| \En \str{P} \right)$, in fact it more fundamentally provides a lower bound on the corresponding {\Renyi} entropy, so that would let us handle $H_{\overline{\alpha}} (\str{S}|\En \str{P} )$. Similarly, our approach for bounding $\Hmax^{\esL}\left(\str{\lkE}\right)$ in this work also proceeds via bounding the {\Renyi} entropy, so it already provides a bound on the $H_{\hat{\alpha}} \left(\str{\lk}^{A\to E}\right), H_{\hat{\alpha}}\left(\str{\lk}^{B\to E}\right)$ terms. Hence this approach seems plausible (in fact, this chain rule has already been applied in part of a security proof for device-dependent QKD in~\cite{arx_GLT+22}), though we leave the details and potential numerical comparison for future work.

\section*{Acknowledgements}
We thank Jean-Daniel Bancal, Peter Brown, Christopher Chubb, Omar Fawzi, Srijita Kundu, Tony Metger, Joseph Renes, Renato Renner, Nicolas Sangouard, Pavel Sekatski, and Marco Tomamichel for helpful discussions.

Financial support for this work has been provided by the Natural Sciences and Engineering Research Council of Canada (NSERC) Alliance, and Huawei Technologies Canada Co., Ltd.

Computations were performed using the MATLAB package YALMIP~\cite{yalmip} with the solver MOSEK~\cite{mosek}, as well as Mathematica.

\appendix

\section{Minor variations}
\label{app:vars}

\subsection{Event ordering in leakage channel}
\label{app:varorder}

One slightly restrictive property of the structure we have imposed on the leakage process is that (focusing on Alice; the situation for Bob is analogous) the leakage register $\lk^{A\to E}_j$ for Eve is produced \emph{before} Alice's device receives $\lk^{B\to A}_j$ from Bob and measures it to produce an output. 
This is mainly to ensure we have a well-defined joint state on $\lk^{A\to B}_j \lk^{A\to E}_j \lk^{B\to A}_j \lk^{B\to E}_j$ after applying $\lchann_j$, but has the drawback that $\lk^{A\to E}_j$ may not be fully able to encode information about Alice's output in that round. Still, this seems to be a not very significant restriction, since Alice's device can for instance produce some ``preliminary'' output $\hat{A}_j$ using only $\qA_j X_j$, then use $\lk^{A\to E}_j$ to encode some information about this preliminary output at least (although indeed this preliminary output may differ from the final output $A_j$ produced after $\lk^{B\to A}_j$ is received). Alternatively, if the device memories can retain the outputs for one round at least, the next round's registers $\lk^{A\to E}_{j+1}$ could be used to leak information about the $j^\text{th}$ round outputs, i.e.~the devices could just ``defer'' the leakage by one round. Hence when the number of rounds is large, this issue seems unlikely to be significant.
A perhaps more significant restriction in the model is that only one leakage register is sent in each direction (per round), rather than allowing for arbitrarily many iterations of leakage between the devices in both directions. 

Still, we can in fact somewhat accommodate the above possibilities while retaining the results we derived. To do so, we could instead allow the leakage channels $\lchann_j$ to have the following structure.
After the state preparation process, Alice's device performs some ``preliminary'' operations on $\qA_j X_j$ to produce a state on some registers $\qA_j \lk^{A}_j X_j$ without disturbing $X_j$. Analogously, Bob's device acts on $\qB_j Y_j$ and produces a state on registers $\qB_j \lk^{B}_j Y_j$. Then some channel is applied on the registers $\lk^{A}_j \lk^{B}_j$ to produce registers $\lk^{A\to B}_j \lk^{A\to E}_j \lk^{B\to A}_j \lk^{B\to E}_j$. Note that this last channel does not need to act locally in Alice and Bob's devices; it can act arbitrarily across the registers $\lk^{A}_j \lk^{B}_j$. 

This model is more general than the main one we focused on in this work, since it could in theory even model multiple rounds of interaction between the devices in the last step.
However, all the bounds we compute in this work apply to this model as well (under a bounded-weight or classical-probabilistic leakage constraint on the overall channel $\lchann_j$). This is because for instance in the Sec.~\ref{sec:1rnd} analysis, considering the state obtained by tracing out $\lk^{A\to B}_j \lk^{A\to E}_j \lk^{B\to A}_j \lk^{B\to E}_j$ from the output of $\lchann_j$ is equivalent to considering the state obtained by tracing out $\lk^{A}_j \lk^{B}_j$ from the output of the ``preliminary'' operations in this model. Since these operations act locally on Alice and Bob's systems, when $\lk^{A}_j \lk^{B}_j$ are removed from their outputs we can absorb the effects of these operations into the state before the measurement, in which case the remainder of the analysis holds by the same arguments. As for the Sec.~\ref{sec:fullprot} analysis, it only used the fact that the $\lk^{A\to E}_j \lk^{B\to E}_j$ registers are subject to the $\dleak$ constraint, which is the same in this model.

One potential drawback here is that interpreting the $\dleak$ constraint in this model seems less straightforward, since the registers $\lk^{A\to B}_j \lk^{A\to E}_j \lk^{B\to A}_j \lk^{B\to E}_j$ produced by such a model seem less directly related to the physical registers being sent between the devices during the actual leakage process --- they are more of a ``summary'' of the final result. On the other hand, if the physical setup justifies imposing the leakage constraints we have used on the ``preliminary registers'' $\lk^{A}_j \lk^{B}_j$  themselves, then it seems our analysis should also basically generalize to this model, by a suitable data-processing argument --- we leave the details for future work, if there is a setup which seems reasonably described by this model.

\subsection{Relations between probability bounds}
\label{app:varprob}

We first make a simple observation: 
if we have $k$ registers $Q_1 \dots Q_k$, 
then the outcome probabilities produced by performing the projective measurement $\left(\pure{\gnd}^{\otimes k}, \id-\pure{\gnd}^{\otimes k}\right)$ on these $k$ registers are the same as what we would obtain if we had measured each register individually with the projectors $\left(\pure{\gnd}, \id-\pure{\gnd}\right)$ and then coarse-grained the outcomes (in the sense that if all $k$ of the individual measurements in the latter scenario returned $\pure{\gnd}$ then we identify it with the outcome $\pure{\gnd}^{\otimes k}$ in the former scenario, and otherwise we identify it with the outcome $\id-\pure{\gnd}^{\otimes k}$). 
Note that we are only claiming that the outcome probabilities from these two processes are the same; the post-measurement states will in general be different, but we will not require them in our discussion here (our arguments only involve the outcome probabilities).

With this, to see the effect of the bounded-weight leakage constraint on e.g.~just the registers $\lk^{A\to B}_j \lk^{B\to A}_j$, we see that the above observation implies that the probability of getting the outcome $\pure{\gnd}^{\otimes 2}$ from the projective measurement $\left(\pure{\gnd}^{\otimes 2}, \id-\pure{\gnd}^{\otimes 2}\right)$ on these registers is the same as the probability of both of them giving outcome $\pure{\gnd}$ when measured individually with $\left(\pure{\gnd}, \id-\pure{\gnd}\right)$. Now note that this probability must be at least the probability of all $4$ outcomes being $\pure{\gnd}$ when individually measuring all $4$ registers $\lk^{A\to B}_j \lk^{A\to E}_j \lk^{B\to A}_j \lk^{B\to E}_j$ with $\left(\pure{\gnd}, \id-\pure{\gnd}\right)$; however, by again invoking the above observation, this is just the probability of getting the outcome $\pure{\gnd}^{\otimes 4}$ from the measurement $\left(\pure{\gnd}^{\otimes 4}, 
\id
-\pure{\gnd}^{\otimes 4}\right)$ on the registers $\lk^{A\to B}_j \lk^{A\to E}_j \lk^{B\to A}_j \lk^{B\to E}_j$.
As an alternative, one could just prove the desired result by direct calculation:
\begin{align}
\tr{\pure{\gnd}^{\otimes 2} \rho_{\lk^{A\to B}_j \lk^{B\to A}_j}} 
&= \tr{\pure{\gnd}^{\otimes 2} \otimes \left(\pure{\gnd}^{\otimes 2} + \left(\id - \pure{\gnd}^{\otimes 2}\right)\right) \rho_{\lk^{A\to B}_j \lk^{A\to E}_j \lk^{B\to A}_j \lk^{B\to E}_j}}
\nonumber \\
&\geq \tr{\pure{\gnd}^{\otimes 4} \rho_{\lk^{A\to B}_j \lk^{A\to E}_j \lk^{B\to A}_j \lk^{B\to E}_j}},
\end{align}
as claimed.

Furthermore, that observation allows us to easily analyze a minor variant of the bounded-weight leakage model, where we instead say that for each of the registers $\lk^{A\to B}_j, \lk^{A\to E}_j, \lk^{B\to A}_j, \lk^{B\to E}_j$ individually, the probability of getting outcome $\pure{\gnd}$ when measured with $\left(\pure{\gnd}, \id-\pure{\gnd}\right)$ is at least $1-\dleak'$ for some $\dleak'>0$. We shall now show the version in the main text can be straightforwardly converted into this variant and vice versa, up to a small change of the leakage parameter (for one direction of the conversion). Specifically, first note that the argument in the previous paragraph already shows
that the bounded-weight model in the main text (with leakage parameter $\dleak$) automatically implies this variant with $\dleak'=\dleak$.
As for the reverse conversion,
let us rephrase this variant as the statement that for each individual measurement the probability of getting outcome $\overline{\pure{\gnd}}$ is at most $\dleak'$, where for brevity we introduce the notation $\overline{\pure{\gnd}} \defvar \id-\pure{\gnd}$ and analogously $\overline{\pure{\gnd}^{\otimes k}} \defvar \id-\pure{\gnd}^{\otimes k}$ for any $k\in\{1,2,3,4\}$ (i.e.~this is just a compact notation for ``complementary'' outcomes). Invoking the observation in the first paragraph, the probability of getting the outcome $\overline{\pure{\gnd}^{\otimes 4}}$ from the measurement $\left(\pure{\gnd}^{\otimes 4}, \overline{\pure{\gnd}^{\otimes 4}}\right)$ on all $4$ registers is the same as the probability for individually measuring each register with $\left(\pure{\gnd}, \overline{\pure{\gnd}}\right)$ and getting at least one $\overline{\pure{\gnd}}$ outcome. 
Applying the union bound, that probability is at most 
$4\dleak'$ (note that no independence assumptions on the state across the registers are needed to apply the union bound). Hence we can conclude that this variant model implies the bounded-weight leakage model in the main text with $\dleak = 4\dleak'$. 

While the above discussion gives a simple conversion between this variant and the bounded-weight model in the main text, it is technically slightly suboptimal to just convert the former to the latter and then apply the analysis in Sec.~\ref{sec:1rnd}--\ref{sec:fullprot} --- instead, slightly sharper results for this model could be obtained by modifying the analysis in those sections appropriately, using similar arguments as those we have described above. For instance, noting that the analysis in Sec.~\ref{sec:1rnd} technically only requires analyzing the $2$ leakage registers $\lk^{A\to B}_j \lk^{B\to A}_j$, one can show that in this variant model, we can substitute $\dleak$ in the optimization~\eqref{eq:bndwtopt} with $2\dleak'$ rather than $4\dleak'$.\footnote{
However, if we are instead considering the specialized IID analysis described in Remark~\ref{remark:IIDcase}, we would still need to substitute $\dleak$ with $4\dleak'$ since in that case the $\lk^{A\to E}_j \lk^{B\to E}_j$ registers are also involved in the analysis. In that analysis, though, there is no need to separately subtract off the smooth max-entropy of $\str{\lk}^{A\to E} \str{\lk}^{B\to E}$, so no further $\dleak'$-dependent corrections are involved in that case.
}
Similarly, for Sec.~\ref{sec:fullprot}, in the optimization~\eqref{eq:Renyiopt} we can substitute $\dleak$ with $\dleak'$ rather than $4\dleak'$, since we are only considering a single register $\lk^{A\to E}_j$ or $\lk^{B\to E}_j$.

\section{Technical details regarding entropy accumulation}
\label{app:EAT}

In this appendix, we give some brief specialized comments regarding the application of the EAT in our context, assuming some background familiarity with the use of the EAT in security proofs.

Firstly, in Sec.~\ref{sec:leakmodel}, we have described Alice and Bob as announcing their inputs immediately after each round, and Eve is allowed to update her side-information in each round using those values. We remark that strictly speaking, to validly accommodate such a process, one would have to use a more recent version of the EAT~\cite{arx_MFSR22} rather than some earlier versions~\cite{DFR20,DF19} that had more restrictive conditions on the ``update structure'' of the side-information; however, all versions yield a smooth min-entropy bound of basically the same form, so it does not affect our claims in this work. (Alternatively, one could follow the approach used in the DIQKD security proofs~\cite{ARV19,TSB+22} based on the EAT versions in~\cite{DFR20,DF19}. Specifically, in those works, during the \emph{physical} protocol itself Alice and Bob do not announce their inputs until all the measurements has been performed. 
In that case the device measurements commute with the process of Eve preparing the states to send to the devices, allowing the security proof to instead be focused on a \emph{virtual} process where Eve instead prepares the entire $\En$ register before the protocol begins, without updating it based on the input values --- see e.g.~\cite{TSB+22} for further explanation. Another point worth highlighting is that there have been recent proposals for DI protocols in which the inputs are not revealed to the adversary~\cite{arx_BRC21}, though as noted in that work, for some such protocols it is not currently clear how to perform a full finite-size analysis due to some technical limitations of the EAT.)

Next, regarding the difficulty mentioned at the start of Sec.~\ref{sec:fullprot} in applying the EAT when the leakage registers are present: the main issue is that in each round, the leakage registers could potentially depend on the secret data generated in preceding rounds. 
If Eve updates her side-information in each round based on these leakage registers, this means that for the original EAT versions~\cite{DFR20,DF19}, the technique mentioned above of commuting the measurement and preparation processes no longer works. 
As for the generalized version~\cite{arx_MFSR22}, it does not seem straightforward to cleanly ``defer'' the update processes involving the leakage registers to the end in such a way that the no-signalling condition in that version is fulfilled.

We also remark that technically, the versions of the EAT proven in~\cite{DFR20,DF19,arx_MFSR22} may rely on the $\qA_j \qB_j$ systems being finite-dimensional (though the final bounds are independent of these dimensions, so we can still allow them to have unboundedly large finite dimension). However, a recent work~\cite{arx_FGR22} has extended a version of the EAT to states on general von Neumann algebras, hence it may be possible to allow the systems to actually have infinite dimension. We also note that the techniques in~\cite{arx_BFF21} for computing entropy bounds were inherently derived for infinite-dimensional systems.

Finally, a technical point regarding the derivation of the last line in~\eqref{eq:chainRenyi}. To obtain that bound from Corollary~3.5 of~\cite{DFR20}, we technically needed to use the fact that from our model of the devices, one can define some registers $R_j$, an initial state $\rho^0_{R_0}$, and a sequence of channels $\mathcal{E}_j: R_{j-1} \to R_j \lkE_j$, such that the state we consider on $\str{\lkE}$ is of the form $
(\mathcal{E}_n \circ \dots \circ \mathcal{E}_1)[\rho^0_{R_0}]$ (leaving some identity channels implicit). (Basically, the idea would be simply to encode the processes described in Sec.~\ref{sec:leakmodel} into the channels $\mathcal{E}_j$, using the registers $R_j$ to store all registers other than $\lkE_j$ after each round.)
With this we can (inductively) apply Corollary~3.5 of~\cite{DFR20}, identifying the $\lkE_j$ and $R_j$ registers in our situation with the $A_j$ and $R$ registers in that theorem statement (and setting the $B_j$ registers in that statement to be trivial registers). Strictly speaking, this would technically give us a bound where the sum in the right-hand-side of~\eqref{eq:chainRenyi} instead has terms of the form $\sup_{\omega
} H_\alpha(\lkE_j|
\lkE_1^{j-1}
)_{\mathcal{E}_j 
[\omega]}$ 
where $\omega$ is a state on $R_j 
\lkE_1^{j-1}$ (and the definition of conditional {\Renyi} entropy follows that used in~\cite{DFR20}).
However, by noting that $H_\alpha(\lkE_j|
\lkE_1^{j-1}) \leq H_\alpha(\lkE_j)$ for $\alpha\geq1/2$~\cite{Tom16}, 
and that the channels $\mathcal{E}_j$ can be defined such that they end with producing the $\lkE_j$ systems via the $\lchann_j$ channels, we can upper bound these terms with the expression in~\eqref{eq:chainRenyi}.

\section{Modified Gentle Measurement Lemma}
\label{app:GML}

\begin{lemma}
Let $\rho_{AB}$ be a state such that if a measurement with projectors $(\pure{0}, \id - \pure{0})$ (for some pure state $\ket{0}$) is performed on register $A$, the probability of getting the $\pure{0}$ outcome is at least $1-\delta$. Then we have $F(\rho_{AB}, \pure{0}_A \otimes \rho_{B}) \geq
1-\delta
$.
\end{lemma}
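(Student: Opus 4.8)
The plan is to prove the bound by reducing it, via Uhlmann's theorem, to two elementary fidelity estimates, each of which contributes a factor of $\sqrt{p}$, where $p \defvar \tr{(\pure{0}_A \otimes \id_B)\,\rho_{AB}}\geq 1-\delta$ denotes the probability of the $\pure{0}$ outcome. Crucially, one cannot simply invoke the usual (projective) Gentle Measurement Lemma, because that would control the distance to $\pure{0}_A\otimes\sigma_B$ with $\sigma_B$ the \emph{conditional} state of $B$ given the $\pure{0}$ outcome, rather than the marginal $\rho_B$ we actually want; the extra work is precisely in passing from $\sigma_B$ to $\rho_B$ without losing the constant.

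First I would handle the trivial cases and set up notation: if $p=0$ then $\delta\geq 1$ and the claim is vacuous since $F\geq 0$, so assume $p>0$ and let $\sigma_B\defvar\bra{0}_A\rho_{AB}\ket{0}_A/p$ be the normalized conditional state. Expanding $\tr[A]{\rho_{AB}}$ over an orthonormal basis of $A$ containing $\ket{0}$ gives the key structural fact $\rho_B = p\,\sigma_B + (1-p)\tau_B$ for some state $\tau_B$; in particular $\rho_B\geq p\,\sigma_B$ as operators on $B$.

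Next comes the purification step: I would show $F(\rho_{AB},\pure{0}_A\otimes\rho_B)\geq\sqrt{p}\,F(\rho_B,\sigma_B)$. Fix a purification $\ket{\psi}_{ABR}$ of $\rho_{AB}$ and set $\ket{\chi}_{BR}\defvar\bra{0}_A\ket{\psi}_{ABR}$, a subnormalized vector with $\inn{\chi}{\chi}=p$ whose reduced state on $B$ is $p\,\sigma_B$. By Uhlmann's theorem applied to $\rho_B$ and $\sigma_B$, there is a purification $\ket{\mu}_{BR}$ of $\rho_B$ (enlarging $R$ if needed) with $|\inn{\chi}{\mu}|=\sqrt{p}\,F(\rho_B,\sigma_B)$; since $\ket{0}_A\otimes\ket{\mu}_{BR}$ purifies $\pure{0}_A\otimes\rho_B$ and its overlap with $\ket{\psi}_{ABR}$ equals $\inn{\chi}{\mu}$, Uhlmann's theorem in the other direction yields the stated inequality. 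Then I would show $F(\rho_B,\sigma_B)\geq\sqrt{p}$: from $\rho_B\geq p\,\sigma_B$ and operator monotonicity of the square root, $\sqrt{\rho_B}\geq\sqrt{p}\,\sqrt{\sigma_B}$, hence $F(\rho_B,\sigma_B)=\norm{\sqrt{\sigma_B}\sqrt{\rho_B}}_1\geq\tr{\sqrt{\sigma_B}\sqrt{\rho_B}}\geq\sqrt{p}\,\tr{\sigma_B}=\sqrt{p}$, using that the trace of a product of positive operators is real and nonnegative. Multiplying the two factors gives $F(\rho_{AB},\pure{0}_A\otimes\rho_B)\geq p\geq 1-\delta$.

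I expect the only real subtlety to be the bookkeeping in the two applications of Uhlmann's theorem — matching the purifying systems of $\rho_B$ (viewed as a marginal of $\rho_{AB}$) and of $\pure{0}_A\otimes\rho_B$ on a common ancilla — which is harmless since the ancilla may always be enlarged. I would also remark that cruder routes lose the constant: for instance, bounding $\pd(\rho_{AB},\pure{0}_A\otimes\sigma_B)$ and $\pd(\sigma_B,\rho_B)$ each by $\sqrt{\delta}$ and adding via the triangle inequality for purified distance only yields $F\geq\sqrt{1-4\delta}$, which is strictly weaker than the claimed $1-\delta$. The essential point of the argument above is that the two ``$\sqrt{p}$'' contributions multiply rather than adding as distances.
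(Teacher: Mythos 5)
Your proof is correct, but it takes a genuinely different route from the paper's. The paper argues in one direct computation at the purified level: fix a purification $\ket{\rho}_{ABR}$ of $\rho_{AB}$, use monotonicity of fidelity under tracing out $R$ to get $F(\rho_{AB},\pure{0}_A\otimes\rho_B)\geq F(\rho_{ABR},\pure{0}_A\otimes\rho_{BR})$, and then evaluate the pure-versus-mixed fidelity explicitly: writing $\ket{\rho}_{ABR}=\sum_j\ket{j}_A\ket{\omega^j}_{BR}$, one finds $F^2\geq\bra{\omega^0}\rho_{BR}\ket{\omega^0}=\sum_j\left|\inn{\omega^0}{\omega^j}\right|^2\geq\left|\inn{\omega^0}{\omega^0}\right|^2\geq(1-\delta)^2$. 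You instead factor the bound through the conditional state $\sigma_B$: two applications of Uhlmann's theorem give $F(\rho_{AB},\pure{0}_A\otimes\rho_B)\geq\sqrt{p}\,F(\rho_B,\sigma_B)$, and operator monotonicity of the square root applied to $\rho_B\geq p\,\sigma_B$ gives $F(\rho_B,\sigma_B)\geq\sqrt{p}$, so the two factors multiply to $p\geq 1-\delta$ — the same constant as the paper, and both arguments are tight on the same family of examples. What each buys: the paper's route is shorter and needs only data processing plus the identity $F(\pure{\psi},\sigma)=\sqrt{\bra{\psi}\sigma\ket{\psi}}$, with no Uhlmann step and no operator-monotonicity input; your modular route makes explicit where the naive ``standard Gentle Measurement plus triangle inequality'' argument loses a constant (the passage from the conditional state $\sigma_B$ to the marginal $\rho_B$ enters multiplicatively rather than additively), and the intermediate inequality $F(\rho_{AB},\pure{0}_A\otimes\rho_B)\geq\sqrt{p}\,F(\rho_B,\sigma_B)$ could be reusable on its own. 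Your bookkeeping (enlarging the purifying register $R$ when invoking Uhlmann, and the vacuous case $p=0$) is handled correctly.
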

\begin{proof}
Let $\ket{\rho}_{ABR}$ be a purification of $\rho_{AB}$. Extend the state $\ket{0}_A$ to an orthonormal basis  
$\{\ket{j}_A\}$ 
for $A$, in which case we can write $\ket{\rho}_{ABR} = 
\sum_j \ket{j}_A  \ket{\omega^j}_{BR}
$
for some subnormalized states $\ket{\omega^j}_{BR} \defvar \left(\bra{j}_A \otimes \id_{BR}\right) \ket{\rho}_{ABR}$.
More specifically, these states have (squared) norm $\inn{\omega^j}{\omega^j} = 
\tr{\pure{j}_A {\rho}_{A}}$, which implies that in particular we have $\inn{\omega^0}{\omega^0} \geq 1-\delta$ by the condition on the state $\rho_{AB}$. 
Also observe that by tracing out $A$ from that expression for $\ket{\rho}_{ABR}$, we can write ${\rho}_{BR} = \sum_j \pure{\omega^j}_{BR}$ (this is not a spectral decomposition of ${\rho}_{BR}$ because the terms may be non-orthogonal, but this does not affect our argument). With this, by monotonicity of fidelity we have
\begin{align}
F(\rho_{AB}, \pure{0}_A \otimes \rho_{B})^2 \geq F(\rho_{ABR}, \pure{0}_A \otimes \rho_{BR})^2 
&= \bra{\rho}_{ABR} \left(\pure{0}_A \otimes \rho_{BR}\right) \ket{\rho}_{ABR} \nonumber\\
&= \bra{\omega^0}_{BR} \rho_{BR} \ket{\omega^0}_{BR} \nonumber\\
&= \sum_j \left|\inn{\omega^0}{\omega^j}\right|^2 \nonumber\\
&\geq \left|\inn{\omega^0}{\omega^0}\right|^2 \nonumber\\
&\geq (1-\delta)^2,
\end{align}
as claimed.
\end{proof}

This lemma differs slightly from the standard Gentle Measurement Lemma~\cite{Win99,Wat18} in that it does not show $\rho_{AB}$ is close to a post-measurement state (after the measurement with projectors $(\pure{0}_A \otimes \id_B, \id_{AB} - \pure{0}_A \otimes \id_B)$ is performed and the first outcome is obtained), but rather a state where the reduced state on $B$ is the \emph{same} as that before the measurement.\footnote{In our analysis we need the latter property rather than the former, because the function that sends a state to the normalized post-measurement state (conditioned on a particular outcome) is not a valid CPTP map,
which causes some problems in our argument.
One potential modification of our approach would be to instead consider the \emph{subnormalized} post-measurement state conditioned on the $\pure{\gnd}$ outcome, in which case one could write the function as a completely positive trace-nonincreasing (rather than trace-preserving) map, which does have some usable properties. However, we leave a detailed analysis of this idea for future work.
} Furthermore, the fidelity bound here is slightly worse --- in the standard Gentle Measurement Lemma (for e.g.~the version in~\cite{Wat18}), the lower bound is $\sqrt{1-\delta}$ instead. However, note that at small $\delta$ we have $\sqrt{1-\delta} \approx 1-\delta/2$, in which case the bounds are not too different (we have basically only lost about a factor of two on the $\delta$ parameter). There is also a technical restriction that in our version, we have only considered the case where the measurement outcome of interest corresponds to a rank-$1$ projector; it seems not entirely straightforward how to precisely formulate a generalization beyond this case, hence we leave it for future work.

\section{Details for relaxed optimizations}
\label{app:bndwtopt}

Recall that the leakage channel has the internal structure $\lchann = \lchann^A \otimes \lchann^B$ where $\lchann^A: \qA X \to \qA \lk^{A\to B} \lk^{A\to E} X$ and analogously for $\lchann^B$. Let us define another channel $\widetilde{\lchann}^A : \qA X \to \qA X$ where 
\begin{align}
\widetilde{\lchann}^A \defvar \pure{\gnd}_{\lk^{A\to B}} \otimes \left(\operatorname{Tr}_{\lk^{A\to B} \lk^{A\to E}} \circ \lchann^A\right),
\end{align}
i.e.~it simply discards the leakage registers from the output of $\lchann^A$ and re-initializes $\lk^{A\to B}$ in the state $\pure{\gnd}$; analogously define another channel $\widetilde{\lchann}^B$ from $\lchann^B$. 
Putting this together with~\eqref{eq:afterleak}, observe that we can write the states $\rho^{xy}_{\qA \qB XYR } \otimes \pure{\gnd}^{\otimes 2}$ from~\eqref{eq:closeness} in the form
\begin{align}
\rho^{xy}_{\qA \qB XYR } \otimes \pure{\gnd}^{\otimes 2} = (\widetilde{\lchann}^A \otimes \widetilde{\lchann}^B \otimes \idmap_R) \left[\omega_{\qA \qB R} \otimes \pure{xy}_{XY}\right].
\end{align}
Now if we were to apply the measurement channel $\mchann^A \otimes \mchann^B$ on these states, the resulting states would have the form
\begin{align}
\sigma^{xy}_{ABXYR} &=\left(\mchann^A \otimes \mchann^B \otimes \idmap_{R} \right)\left[\rho^{xy}_{\qA \qB XYR } \otimes \pure{\gnd}^{\otimes 2} \right], \nonumber \\
&= \left(\mchann^A \otimes \mchann^B \otimes \idmap_{R} \right) (\widetilde{\lchann}^A \otimes \widetilde{\lchann}^B \otimes \idmap_R) \left[\omega_{\qA \qB R} \otimes \pure{xy}_{XY}\right], \nonumber \\
&= (\widetilde{\mchann}^A \otimes \widetilde{\mchann}^B \otimes \idmap_R) \left[\omega_{\qA \qB R} \otimes \pure{xy}_{XY}\right], \label{eq:localmeas}
\end{align}
where we have defined a new channel $\widetilde{\mchann}^A \defvar \mchann^A \circ \widetilde{\lchann}^A : \qA X \to AX$, and analogously defined $\widetilde{\mchann}^B \defvar \mchann^B \circ \widetilde{\lchann}^B: \qB Y \to BY$. 
Crucially, note that these new channels are local measurement channels acting in Alice and Bob's devices respectively (and they inherit the property that they do not disturb the inputs $XY$).
In analogy to~\eqref{eq:aftermeas}, let us now also define
\begin{align}
\sigma^\mathrm{gen}_{AB XY R} = \sum_{xy} \pgen_{xy} \sigma^{xy}_{AB XY R}. \label{eq:localmeasgen}
\end{align}

Note that by monotonicity of fidelity, the bound~\eqref{eq:closeness} ensures that these new  
$\sigma$ states 
are ``close'' to the original $\rho$ states defined in~\eqref{eq:aftermeas}; more precisely, we have
\begin{align}
F\left(\rho^{xy}_{ABXYR}, \sigma^{xy}_{ABXYR} \right) \geq 1-\dleak
\quad \text{and} \quad
F\left(\rho^\mathrm{gen}_{ABXYR}, \sigma^\mathrm{gen}_{ABXYR} \right) \geq 1-\dleak
.
\end{align}
In other words, we have shown that 
each $\rho^{xy}$ state in the optimization~\eqref{eq:mainoptxy} has the property that there is a ``nearby'' state $\sigma^{xy}$ (i.e.~within fidelity $1-\dleak$) of the form given by~\eqref{eq:localmeas}, and analogously for the $\rho^\mathrm{gen}$ state in the objective function.
Therefore, we can relax the optimization~\eqref{eq:mainoptxy} to the following problem:
\begin{align}
\label{eq:middleopt}
\begin{gathered}
\inf_{\rho^{xy}, \omega, \widetilde{\mchann}^A, \widetilde{\mchann}^B} H(S|XYR)_{\rho^\mathrm{gen}}\\
\begin{aligned}
\suchthat \quad &\rho^{xy}_{AB} = \sum_{ab} \constr_{ab|xy} \pure{ab}
\text{\; and \;}
F(\rho^{xy}_{ABXYR},\sigma^{xy}_{ABXYR}) \geq 1-\dleak \quad \forall x,y .
\end{aligned} 
\end{gathered}
\end{align}
where the states $\rho^{xy}$ are now treated as optimization variables themselves (and with $\rho^\mathrm{gen} = \sum_{xy} \pgen_{xy} \rho^{xy}$), while the states $\sigma^\mathrm{gen},\sigma^{xy}$ are to be understood as functions of the state $\omega_{\qA \qB R}$ and the measurement channels $\widetilde{\mchann}^A, \widetilde{\mchann}^B$ via~\eqref{eq:localmeas}--\eqref{eq:localmeasgen}. (In this optimization, $\widetilde{\mchann}^A, \widetilde{\mchann}^B$ are allowed to be arbitrary measurement channels acting on the appropriate registers without disturbing $X,Y$. This technically means we have dropped the structure $\widetilde{\mchann}^A \defvar \mchann^A \circ \widetilde{\lchann}^A$, $\widetilde{\mchann}^B \defvar \mchann^B \circ \widetilde{\lchann}^B$ in their construction; however, since the original measurement channels $\mchann^A,\mchann^B$ were anyway arbitrary, this particular relaxation does not make a difference. The potential loss of tightness in changing from~\eqref{eq:mainoptxy} to the above optimization arises only from relaxing the exact expressions for $\rho^{xy}$ to a ``looser'' characterization in terms of nearby states.)

Finally, to further relax the optimization to our final form~\eqref{eq:bndwtopt}, we simply note that the fidelity constraints imply that the objective function is lower-bounded by $H(S|XYR)_{\sigma^\mathrm{gen}} - \fcont(\dleak)$, after which we can relax the fidelity constraints to $F(\rho^{xy}_{AB},\sigma^{xy}_{AB}) \geq 1-\dleak$ by tracing out $XYR$, and substitute in the equality constraints on $\rho^{xy}_{AB}$. This yields the optimization~\eqref{eq:bndwtopt}. In the next section, we describe how to implement the constraints in that optimization in a manner suitable for an SDP.

\subsection{Imposing fidelity constraints}
\label{app:fidSDP}

We follow the approach in e.g.~\cite{Tom16}: for any $f_{\star}\in[0,1]$ and any normalized states $\tau_Q,\sigma_Q$ on a register $Q$, if we choose some purification $\ket{\tau}_{QQ'}$ of $\tau_Q$ onto an isomorphic register $Q'$, then by Uhlmann's theorem we know that $F(\tau_Q,\sigma_Q) \geq f_{\star}$ if and only if there exists some extension $\sigma_{QQ'}$ of $\sigma_Q$ such that $\sqrt{\bra{\tau} \sigma \ket{\tau}_{QQ'}} \geq f_{\star}$. To express this in a more ``SDP-compatible'' form\footnote{To be more precise, this formulation is mainly only SDP-compatible in contexts where the state $\tau$ is a ``fixed constant'' rather than an optimization variable, and we can write a specific purification $\ket{\tau}$ of it --- if the state $\tau$ and/or its purification $\ket{\tau}$ were intended to be an optimization variable itself, then this approach would not be SDP-compatible since the quantity $\bra{\tau} \widetilde{\sigma} \ket{\tau}_{QQ'} = \tr{\pure{\tau}_{QQ'} \widetilde{\sigma}_{QQ'}}$ is not jointly affine with respect to $\widetilde{\sigma},\tau$. However, the alternative approach we present next (namely, the formulation in~\eqref{eq:fidasSDP}) would be usable in that scenario.} (and to avoid ambiguity from overloading the $\sigma$ symbol), we can rephrase this equivalently as the statement that there exists an operator $\widetilde{\sigma}_{QQ'} \geq 0
$ such that $\bra{\tau} \widetilde{\sigma} \ket{\tau}_{QQ'} \geq f_{\star}^2$ and $\tr[Q']{\widetilde{\sigma}_{QQ'}} = \sigma_Q$ (note that the latter constraint automatically imposes normalization on $\widetilde{\sigma}_{QQ'}$). 

With this in mind, we can rewrite the optimization~\eqref{eq:bndwtopt} as follows: for each $xy$, let us define the state $\tau^{xy}_{AB} \defvar \sum_{ab} \constr_{ab|xy} \pure{ab}_{AB}$, so the constraints simply become $F\!\left(\tau^{xy}_{AB}, \sigma^{xy}_{AB}\right) \geq 1-\dleak$.
Picking any particular purifications of the $\tau^{xy}$ states, for instance $\ket{\tau^{xy}}_{ABA'B'} \defvar \sum_{ab} \sqrt{\constr_{ab|xy}} \ket{abab}_{ABA'B'}$, from the above argument we see that the optimization can be equivalently written as
\begin{align}
\label{eq:optSDPfid}
\begin{gathered}
\inf_{\omega, \widetilde{\mchann}^A, \widetilde{\mchann}^B, \widetilde{\sigma}^{xy}} H(S|XYR)_{\sigma^\mathrm{gen}} - \fcont(\dleak)\\
\suchthat \quad \bra{\tau^{xy}}\widetilde{\sigma}^{xy}\ket{\tau^{xy}}_{ABA'B'} \geq (1-\dleak)^2 \text{\; and \;} \tr[A'B']{\widetilde{\sigma}^{xy}_{ABA'B'}} = \sigma^{xy}_{AB} \quad \forall x,y,
\end{gathered}
\end{align}
where $\widetilde{\sigma}^{xy}$ are states on $ABA'B'$,
and the states $\sigma^\mathrm{gen},\sigma^{xy}$ are again understood as functions of $\omega_{\qA \qB R}, \widetilde{\mchann}^A, \widetilde{\mchann}^B$ via~\eqref{eq:localmeas}--\eqref{eq:localmeasgen}.
With this formulation, the constraints can indeed be imposed in an SDP.

As an alternative approach, one could use a result derived in~\cite{Kil12,Wat12,Wat18} --- 
for two normalized quantum states $\tau,\sigma$ of dimension $d$, their fidelity $F(\tau,\sigma)=\norm{\sqrt{\tau}\sqrt{\sigma}}_1$ can be expressed as the following SDP (in which the maximum is indeed attained):
\begin{align}
\begin{gathered}
\max_{\Xi \in \mathbb{C}^{d\times d}} \frac{1}{2} \tr{\Xi+\Xi^\dagger} \\
\begin{aligned}
\suchthat \quad & 
\begin{pmatrix}
\tau & \Xi\\
\Xi^\dagger & \sigma\\
\end{pmatrix}
\geq 0.
\end{aligned}
\end{gathered}
\end{align}
(By symmetry properties of the above optimization, $\Xi$ can be restricted to be hermitian if desired, given that $\tau,\sigma$ are hermitian.) In particular, this implies that 
\begin{align}
\label{eq:fidasSDP}
F(\tau,\sigma) \geq f_{\star} \quad \iff \quad 
\exists\; \Xi \in \mathbb{C}^{d\times d}
\suchthat\; \frac{1}{2} \tr{\Xi+\Xi^\dagger}\geq f_{\star} \text{ and }
\begin{pmatrix}
\tau & \Xi\\
\Xi^\dagger & \sigma
\end{pmatrix}
\geq 0,
\end{align}
which also leads to an SDP formulation of the fidelity constraint.
However, we found that this approach seemed less numerically stable in some cases, and hence we mostly used the formulation~\eqref{eq:optSDPfid}. (In more general circumstances though, the formulation in~\eqref{eq:fidasSDP} has the advantage that the states $\tau,\sigma$ can \emph{both} be treated as optimization variables, unlike the preceding approach based on Uhlmann's theorem.)

\section{Dimension bound from memory bound}
\label{app:membnd}

We first formalize the bounded-memory constraint within our leakage model. For each round $j$, let $C_j$ denote a classical memory register that was stored within the memory registers $\mA_{j-1} \mB_{j-1}$ of the preceding round and passed forward to the registers $\qA_j \qB_j$ during the ``update'' channel; let $\widetilde{Q}^A_j \widetilde{Q}^B_j$ denote all registers in $\qA_j \qB_j$ other than $C_j$ ($\widetilde{Q}^A_j \widetilde{Q}^B_j$ can still contain another copy of $C_j$). With this, we are requiring that $\lchann_j$ has the form $\overline{\lchann}^{A \leftrightarrow B}_j \otimes \overline{\lchann}^{A\to E}_j \otimes \overline{\lchann}^{B\to E}_j$ for some channels $\overline{\lchann}^{A \leftrightarrow B}_j: \widetilde{Q}^A_j \widetilde{Q}^B_j X_j Y_j \to \widetilde{Q}^A_j \widetilde{Q}^B_j \lk^{A\to B}_j \lk^{B\to A}_j X_j Y_j$ and $\overline{\lchann}^{A\to E}_j: C_j \to \lk^{A\to E}_j$ and $\overline{\lchann}^{B\to E}_j: C_j \to \lk^{B\to E}_j$ (in a minor abuse of notation we allow both the last two channels to have input register $C_j$; this does not cause any issues because $C_j$ is a classical register and can be copied without disturbance to input into both channels).

In the remainder of this analysis, we again sometimes omit the $j$ subscripts for brevity; also, analogous to our $\lkE$ notation in Sec.~\ref{sec:fullprot}, we let $\overline{\lchann}$ denote either $\overline{\lchann}^{A\to E}_j$ or $\overline{\lchann}^{B\to E}_j$ since the analysis is the same for both.
For each classical basis state $\pure{c}_C$ of the classical register $C$, let us write $\rho^{(c)}_{\lkE} \defvar \overline{\lchann}\left[\pure{c}_C\right]$.
We first discuss the bounded-weight leakage constraint, which implies that every $\rho^{(c)}$ satisfies $F(\rho^{(c)},\pure{\gnd}) = \sqrt{\bra{\gnd}\rho^{(c)}\ket{\gnd}} \geq \sqrt{1-\dleak}$. 
Hence letting $\widetilde{\lkE}$ be a purifying register for $\lkE$, for each $c$ Uhlmann's theorem gives a purification $\ket{\rho^{(c)}}_{\lkE \widetilde{\lkE}}$ of $\rho^{(c)}_{\lkE}$ such that $F\left(\pure{\rho^{(c)}}_{\lkE \widetilde{\lkE}}, \pure{\gnd}_{\lkE} \otimes \pure{\gnd}_{\widetilde{\lkE}} \right) \geq \sqrt{1-\dleak}$ (note that here we use the same purification $\ket{\gnd}_{\lkE} \ket{\gnd}_{\widetilde{\lkE}}$ of $\ket{\gnd}_{\lkE}$ for every $c$). Then we have $\overline{\lchann} = \operatorname{Tr}_{\widetilde{\lkE}} \circ \overline{\lchann}'$ where $\overline{\lchann}'$ is a classical-to-quantum channel defined by $\overline{\lchann}'\left[\pure{c}_C\right] = \pure{\rho^{(c)}}_{\lkE \widetilde{\lkE}}$, and this channel $\overline{\lchann}'$ still satisfies the bounded-weight leakage constraint with the same $\dleak$ value (with the new ``reference state'' being $\ket{\gnd}_{\lkE} \ket{\gnd}_{\widetilde{\lkE}}$). Also, the states $\ket{\rho^{(c)}}_{\lkE \widetilde{\lkE}}$ and $\ket{\gnd}_{\lkE} \ket{\gnd}_{\widetilde{\lkE}}$ span a subspace of dimension at most $d_C+1$, and hence we can restrict the output space of $\overline{\lchann}'$ to be this subspace.

With the above properties, we see that if we were to consider a virtual process where the channels $\overline{\lchann}_j$ in the actual protocol are replaced with these new ``virtual'' channels $\overline{\lchann}'_j$, we would end up with the same reduced state on $\str{S} \str{\lk}^{A\to E} \str{\lk}^{B\to E} \En \str{P}$, so the value of $\Hmin^{\esp}\left(\str{S}|\str{\lk}^{A\to E} \str{\lk}^{B\to E} \En \str{P} \right)_{\rhoPE}$ in the actual protocol state is the same as for the state produced by the virtual process.
Furthermore, for the latter state we can write  $\Hmin^{\esp}\left(\str{S}|\str{\lk}^{A\to E} \str{\lk}^{B\to E} \En \str{P} \right)_{\rhoPE} 
\geq \Hmin^{\esp}\left(\str{S}|\str{\lk}^{A\to E} \str{\lk}^{B\to E} \widetilde{\str{\lk}}^{A\to E} \widetilde{\str{\lk}}^{B\to E} \En \str{P} \right)_{\rhoPE}$. We can then simply lower-bound the latter by viewing $\overline{\lchann}'_j$ as our leakage channels and noting that they still satisfy the original leakage constraints, and their output dimension can be restricted to $d_C+1$, so the analysis in Sec.~\ref{sec:dimbnd} can be applied.
(Note that we cannot use an argument of this form to upper bound $\Hmax^{\esL}\left(\str{\lkE}\right)_{\rhoPE}$ instead, since for quantum systems it is not always upper bounded by $\Hmax^{\esL}\left(\str{\lkE}\overline{\str{\lkE}}\right)_{\rhoPE}$ --- for our argument to work, we had to directly consider the conditional smooth min-entropy.)

As for the classical-probabilistic leakage constraint, there is a slight technicality: since it implies the bounded-weight constraint (with the same $\dleak$ value), we could apply the above argument to again show that it suffices to consider $\overline{\lchann}'$ to have an output space of dimension $d_C+1$, but with the subtlety that this channel would only be subject to the bounded-weight constraint (with parameter $\dleak$) rather than the original classical-probabilistic constraint. However, recall that the analysis in Sec.~\ref{sec:fullprot} yields the same results for either of those leakage constraints, and hence this relaxation of the type of constraint on the channel does not change the final results.

\section{Dimension bounds from energy bounds}
\label{app:dimfromE}

The idea here is to introduce some finite ``cutoff'' energy value $\cutE$, and argue that if we remove all weight on the leakage registers with energy above this cutoff, the resulting state is still close to the original state.
For this approach, we assume that the system Hamiltonians are noninteracting, so the total energy of the systems is given by just summing the energies of the individual systems.
Furthermore, we just focus on discussing either one of the leakage systems $\str{\lk}^{A\to E}$ or $\str{\lk}^{B\to E}$, since to analyze both of them we could repeat this argument for each of them and then apply the triangle inequality.

Let $\pvm_{<}$ denote the projector onto the subspace spanned by energy eigenstates $\ket{e_k}_{\lkE}$ with energy (strictly) less than $\cutE$. Consider 
measuring all the $\lkE_j$ systems individually 
using the measurement with projectors $(\pvm_{<}, \id - \pvm_{<})$, which is equivalent to measuring all the systems using the measurement with projectors $(\pvm_{<}^{\otimes n}, (\id - \pvm_{<}) \otimes \pvm_{<}^{\otimes n-1} , \pvm_{<} \otimes (\id - \pvm_{<}) \otimes \pvm_{<}^{\otimes n-2}, \dots, (\id - \pvm_{<})^{\otimes n})$.
If we could certify that the probability of getting the outcome 
$\pvm_{<}^{\otimes n}$ 
is at least some value $1-\tilde{\eps}^2$, then by the Gentle Measurement Lemma we would know that the original state is $\tilde{\eps}$-close (in purified or trace distance) to the post-measurement state conditioned on that outcome. Observe also that this post-measurement state is supported on the subspace spanned by considering only the eigenstates $\ket{e_k}_{\lkE}$ with energy less than $\cutE$, hence we could try to prove its security by applying the dimension-bound-based analysis (assuming that there are only finitely many such $\ket{e_k}_{\lkE}$).\footnote{For this sketch we gloss over the technicality mentioned in Appendix~\ref{app:GML} that sending some state to its corresponding post-measurement state does not form a valid CPTP map; it should probably be possible to address this issue with another suitably modified Gentle Measurement Lemma and/or allowing trace-nonincreasing maps.}

Thus our task is reduced to finding the probability bound $1-\tilde{\eps}^2$ as a function of $\cutE$ (and the expected-energy bound $\avgE$).
To do so, first observe that not getting the outcome 
$\pvm_{<}^{\otimes n}$ 
implies that at least one of the systems had an outcome corresponding to energy at least $\cutE$, which implies that the outcome value for total energy is at least $\cutE$ as well, recalling we chose our ground state energy such that all energies are non-negative. (Here we implicitly used the fact that the measurements $(\pvm_{<}, \id - \pvm_{<})$ produce the same outcome probabilities as performing an energy measurement and then coarse-graining the outcome depending on whether the value is below $\cutE$.) However, we know the total expected-energy of the systems is bounded by $n \avgE$, and hence by Markov's inequality, the probability of such an outcome cannot be more than $n \avgE/\cutE$. Therefore we could take $\tilde{\eps} = \sqrt{n \avgE/\cutE}$.

Unfortunately, this bound is rather trivial as it is increasing in $n$. (It might be possible to choose $\cutE \propto n$ and still obtain nontrivial results for specific Hamiltonians, but this does not appear very promising in general). The core difficulty here seems to be that while the energy bound implies that the probability of each individual register having energy above $\cutE$ is at most $\avgE/\cutE$ (by Markov's inequality), in order to apply the dimension-bound argument we need to ensure that \emph{all} the registers are subject to the cutoff. (Applying the Gentle Measurement Lemma to the $\lkE_j$ registers individually also seems unlikely to help, because this would only give an exponentially decreasing fidelity bound between the original and post-measurement states on the full $\str{\lkE}$ system.) We leave for future work the question of whether approaching the analysis in a different way could overcome the scaling issue in this approach.

\section{Lagrange dual analysis}
\label{app:lagrange}

Here we focus on proving Lemma~\ref{lem:Ebnd} for the energy-bound case (Sec.~\ref{sec:Ebnd}); the results for the dimension-bound case (Sec.~\ref{sec:dimbnd}) can be easily obtained by an analogous argument (simply omit the energy constraint and impose the fact that the number of $w_k$ variables is $\dL$). 

\newcommand{\wmin}{w_\mathrm{min}}
In this section let us use the notation $\wmin \defvar 1-\dleak$, so the first constraint in~\eqref{eq:Ebndopt} can be written more compactly as $w_0 \geq \wmin$. As mentioned above Lemma~\ref{lem:Ebnd}, we work in the extended reals $\mathbb{R} \cup \{\pm \infty\}$.
Since~\eqref{eq:Ebndopt} is a constrained optimization problem, it is easily shown (see e.g.~\cite{BV04v8}) that for any choice of 
$\lagG, \lagE \in \mathbb{R}_{\geq0}$ and $\lagP \in \mathbb{R}$,
we can upper-bound the optimal value with the \term{Lagrange dual function} of the optimization:
\begin{align}\label{eq:dualfunc}
\dualf(\lagG, \lagE, \lagP) \defvar \sup_{w_k \in \mathbb{R}_{\geq 0}} 
L(\mathbf{w}, \lagG, \lagE, \lagP)
,
\end{align} 
where $L$ is the \term{Lagrangian} (choosing a sign convention for the equality constraint that keeps the final expressions slightly cleaner):
\begin{align}
L(\mathbf{w}, \lagG, \lagE, \lagP) \defvar 
\sum_k w_k^\alpha + \lagG (w_0 - \wmin) + \lagE \left(\avgE - \sum_k w_k E_k \right) + \lagP \left(1-\sum_k w_k \right)
.
\end{align} 
We shall now show that for $\lagP > \lagG$, the above expression for $\dualf$ reduces to the one presented in Lemma~\ref{lem:Ebnd}. (For $\lagG \geq \lagP$, the above expression evaluates to $\dualf(\lagG, \lagE, \lagP) = +\infty$ and hence only yields a trivial bound --- to see this, note that $L(\mathbf{w}, \lagG, \lagE, \lagP)$ is of the form $w_0^\alpha + \lagG w_0 - \lagP w_0 + Z$ where $Z$ denotes some terms independent of $w_0$. Therefore, when $\lagG \geq \lagP$ the expression is unbounded as we take $w_0 \to +\infty$, and hence $\dualf(\lagG, \lagE, \lagP) = +\infty$. We hence do not consider this regime in the rest of our analysis; if desired, from the perspective of optimization theory we can view the function $\dualf(\lagG, \lagE, \lagP)$ in Lemma~\ref{lem:Ebnd} as being implicitly understood to have value $+\infty$ outside of the domain specified in the lemma statement.)

To simplify the expression for $\dualf$, we note that by concavity of the original optimization~\eqref{eq:Ebndopt} (or by direct inspection) $L(\mathbf{w}, \lagG, \lagE, \lagP)$ is a concave function of $\mathbf{w}$, which implies that if the domain contains a stationary point, then that point attains the maximum value over the domain. To find whether such a stationary point exists, we compute the partial derivatives $\frac{\partial L}{\partial w_k}$ over the interior of the domain, i.e.~for $w_k > 0$:
\begin{align}\label{eq:partials}
\begin{aligned}
&\frac{\partial L}{\partial w_0} = \alpha w_0^{\alpha-1} + \lagG - \lagP, \\
&\frac{\partial L}{\partial w_k} = \alpha w_k^{\alpha-1} - \lagE E_k - \lagP \quad \text{ for } k\neq 0,
\end{aligned}
\end{align}
where in the first line we have used the fact that we set $E_0=0$.
We hence see that given $\lagP > \lagG$, the system of equations $\frac{\partial L}{\partial w_k} = 0$ indeed has a solution with strictly positive $w_k$ values, namely:
\begin{align}
\begin{aligned}
&w_0 = \left(\frac{\lagP - \lagG}{\alpha}\right)^{\frac{1}{\alpha-1}} 
= \left(\frac{\alpha}{\lagP - \lagG}\right)^{\frac{1}{1-\alpha}}, 
\\
&w_k = \left(\frac{\lagE E_k + \lagP}{\alpha}\right)^{\frac{1}{\alpha-1}} 
= \left(\frac{\alpha}{\lagE E_k + \lagP}\right)^{\frac{1}{1-\alpha}}
\quad \text{ for } k\neq 0.
\end{aligned}
\end{align}

Substituting this solution into~\eqref{eq:dualfunc}, 
we conclude that for $\lagG, \lagE, \lagP$ satisfying the lemma conditions, we have
\begin{align}
\dualf(\lagG, \lagE, \lagP) &=
\left(\frac{\alpha}{\lagP - \lagG}\right)^{\frac{\alpha}{1-\alpha}}
- (\lagP-\lagG)\left(\frac{\alpha}{\lagP - \lagG}\right)^{\frac{1}{1-\alpha}} \nonumber\\
&\quad + \sum_{k\neq 0} \left( \left(\frac{\alpha}{\lagE E_k + \lagP}\right)^{\frac{\alpha}{1-\alpha}} - (\lagE E_k + \lagP) \left(\frac{\alpha}{\lagE E_k + \lagP}\right)^{\frac{1}{1-\alpha}} \right) \nonumber\\
&\quad - \lagG \wmin + \lagE \avgE + \lagP 
,
\end{align} 
which simplifies to~\eqref{eq:dualformula} after observing that 
\begin{align}
(\lagP-\lagG)\left(\frac{\alpha}{\lagP-\lagG}\right)^{\frac{1}{1-\alpha}} 
= \alpha^{\frac{1}{1-\alpha}} \left(\frac{1}{\lagP-\lagG}\right)^{\frac{\alpha}{1-\alpha}} 
= \alpha \left(\frac{\alpha}{\lagP-\lagG}\right)^{\frac{\alpha}{1-\alpha}},
\end{align}
and similarly for the terms in the summation.

As for the remaining claims in the lemma, the Lagrange dual function $\dualf$ for a constrained maximization problem is always a convex function of the dual variables~\cite{BV04v8}, since it is the supremum of a family of affine functions (of the dual variables). As for showing that the optimization~\eqref{eq:dualopt} 
has the same value as the original optimization~\eqref{eq:Ebndopt} when $\dleak>0$ (i.e.~$\wmin<1$) and $\avgE>0$, 
this means we have to show that strong duality holds for such parameter values (viewing the former as the dual optimization and the latter as the primal optimization).
In principle, we could do this by noting that the primal optimization~\eqref{eq:Ebndopt} has a strictly feasible point for those parameter values (except for some edge cases where the optimization is already unbounded)\footnote{Explicitly: first consider a simple case where $E_\mathrm{gap} \defvar \inf \{E_k | E_k > 0\}$ is strictly positive (i.e.~we have a gapped Hamiltonian) and the ground state is nondegenerate. By the countable-dimension assumption, we can use $\mathbb{N}$ to label all the non-ground-state energy levels. 
Then for any $t>0$, if we set $w_k=t/(2^k E_k) > 0$ for $k\neq0$, we have $\sum_{k=1}^\infty w_k E_k = \sum_{k=1}^\infty t/2^k = t$, and choosing $w_0$ to satisfy normalization we have $w_0 = 1-\sum_{k=1}^\infty w_k \geq 
1- t/E_\mathrm{gap}$. Hence by choosing $t$ sufficiently small we can satisfy both the $\wmin$ and $\avgE$ constraints with strict inequality (given $\wmin<1$ and $\avgE>0$), with all $w_k$ values being strictly positive, yielding a strictly feasible point.
To cover the edge cases, first note that for any Hamiltonian with $E_\mathrm{gap}=0$ there must be infinitely many energy levels arbitrarily close to zero and hence the optimization~\eqref{eq:Ebndopt} is unbounded in the first place; 
finally, if $E_\mathrm{gap}>0$ and the ground state is degenerate, then either it is infinitely degenerate (in which case~\eqref{eq:Ebndopt} is again unbounded) or it is finitely degenerate and we can just slightly modify the above construction to obtain a strictly feasible point (simply remove some weight from a sufficiently high energy level 
and redistribute it over the ground states).
}, and then invoking an appropriate generalization of Slater's condition to infinite-dimensional domains. However, to offer an alternative approach that avoids technicalities in handling infinite-dimensional vector spaces, we present below a proof for our case that bypasses this aspect, by extracting only the necessary intermediate steps from the standard proofs of Slater's condition (see e.g.~\cite{BV04v8,SB14}).

Specifically, consider the function $F(\wmin,\avgE,p)$ 
defined as the optimal value of the following concave optimization (which is just a slight generalization of the optimization~\eqref{eq:Ebndopt}, by allowing the weights $w_k$ to sum to some $p\in\mathbb{R}$ instead of $1$):
\begin{align}\label{eq:pertopt}
\begin{gathered}
\sup_{
w_k \in \mathbb{R}_{\geq 0}
} 
\sum_k w_k^\alpha\\
\begin{aligned}
\suchthat \quad & w_0 \geq \wmin, \quad \sum_k w_k E_k \leq \avgE, 
\quad \sum_k w_k = p,
\end{aligned}
\end{gathered}
\end{align}
taking the optimization to have value $-\infty$ if it is infeasible. 
The domain $\mathrm{dom}(F)$ of this function is defined~\cite{BV04v8} to be the set of values $(\wmin,\avgE,p)\in\mathbb{R}^3$ such that $F(\wmin,\avgE,p) \neq -\infty$. 
Now to show that strong duality holds for some particular choice of $(\wmin,\avgE,p)$ in this optimization, it suffices to show that this choice of $(\wmin,\avgE,p)$ lies in the interior of $\mathrm{dom}(F)$ 
(see the proofs of Slater's condition in e.g.~\cite{BV04v8,SB14}; the geometric idea is that this ensures the existence of a nonvertical supporting hyperplane of the subgraph of $F$ at that point, from which an optimal dual solution can be obtained).
In particular, we are focusing on the situation where $\wmin<1$, $\avgE>0$ and $p=1$, in which case it is straightforward to find some sufficiently small $t>0$ such that for all $(\wmin',\avgE',p')$ within distance $t$ (in some norm) of $(\wmin,\avgE,1)$, we have 
$\wmin' 
\leq p'$ 
and $\avgE',p'>0$.\footnote{
For example, we can set $t=\min\{(1-\wmin)/2,\avgE/2,1/2\}>0$ and use the $\infty$-norm, i.e.~$\max\{|\wmin'-\wmin|,|\avgE'-\avgE|,|p-1|\}$.
} The optimization is feasible for all such $(\wmin',\avgE',p')$, since there is a simple feasible point given by $w_0=p'$ and $w_k=0$ for all $k\neq0$.
Hence $F$ is finite in that neighbourhood, i.e.~$(\wmin,\avgE,1)$ is an interior point of $\mathrm{dom}(F)$, as required.

We remark that if the optimization~\eqref{eq:Ebndopt} is such that the optimal solution saturates both the inequality constraints (this will be the case in most situations; specifically, as long as the $\avgE$ value is not so low that it enforces $w_0$ to be strictly larger than $\wmin$, and the Hamiltonian is such that the maximum entropy is an increasing function of the expected 
energy), 
then we can replace them with equality constraints. In that case the theory of Lagrange multipliers (along with strict concavity of the objective function and linearity of the constraints) implies that solving the system of equations $\frac{\partial L}{\partial w_k} , \frac{\partial L}{\partial \lagG} , \frac{\partial L}{\partial \lagE} , \frac{\partial L}{\partial \lagP} = 0 $ for $\mathbf{w}, \lagG, \lagE, \lagP$ yields the optimal $\mathbf{w}$ in the primal optimization~\eqref{eq:Ebndopt} and the optimal $\lagG, \lagE, \lagP$ in the dual optimization~\eqref{eq:dualopt}.
In particular, since the equations $\frac{\partial L}{\partial \lagG} , \frac{\partial L}{\partial \lagE} , \frac{\partial L}{\partial \lagP} = 0$ just reproduce the optimization constraints, this means 
that in principle the choice of $\lagG, \lagE, \lagP$ that yields the best upper bound could be obtained by substituting the equations~\eqref{eq:partials} into the optimization constraints and solving for $\lagG, \lagE, \lagP$.
However, we currently do not have any explicit Hamiltonian in which we can solve the resulting expression (the various resulting summations can be expressed in terms of the Hurwitz zeta function or its derivatives via similar arguments as in the main text, 
but this is difficult to handle further in closed form).

We close this section with some side-remarks about the ``thermodynamic version'' of this argument, i.e.~if we had instead taken the objective function to be the Shannon/Boltzmann entropy. 
(As mentioned in the main text, a solution to this version does not currently seem usable for our context as we do not have a good method to relate it to $H_\alpha(\lkE_j)$ without a dimension bound, and in any case our above approach should give better results as it directly analyzes $H_\alpha(\lkE_j)$. Still, we mention it in case it highlights some useful properties.)
Note that we still keep the constraint on $w_0$, i.e.~we are maximizing the entropy of a system subject to a ground-state probability constraint as well as an energy constraint.
In that case, solving the system of equations  $\frac{\partial L}{\partial w_k} = 0$ yields solutions for $\{w_k | k\neq 0\}$ that are exponentially decreasing with respect to $E_k$.
If we furthermore suppose that the inequality constraints are saturated as mentioned above, this means the optimal solution is essentially a Gibbs state except with a larger value of $w_0$ due to the ground-state constraint, as one might intuitively expect.
Furthermore, for some simple Hamiltonians (such as harmonic oscillators, as studied in the main text) one can explicitly solve for the optimal Lagrange-multiplier values, with the optimal value of $\lagE$ yielding the inverse-temperature parameter $1/(k_B T)$, and the optimal value of $\lagP$ being related to the partition function.

\printbibliography

\end{document}